\newtheorem{thm}{Theorem}
\newtheorem{proposition}[thm]{Proposition}
\newtheorem{lemma}[thm]{Lemma}
\newtheorem{cor}[thm]{Corollary}
\newtheorem{definition}[thm]{Definition}
\theoremstyle{remark}
\newtheorem{remark}[thm]{Remark}
\def\C{\mathbb C}
\def\N{\mathbb N}
\def\R{\mathbb R}
\def\Rd{{{\mathbb R}^d}}
\def\Rdd{{{\mathbb R}^{2d}}}
\def\e{{\rm e}}
\def\Id{{\rm Id}}
\def\Im{{\rm Im}}
\def\Re{{\rm Re}}
\def\eps{\varepsilon}
\def\phi{\varphi}
\def\B{{\mathcal B}}
\def\K{{\mathcal K}}
\def\Y{{\mathcal Y}}
\def\X{{\mathcal X}}
\def\W{{\mathcal W}}
\def\d{{\partial}}
\def\Q{{\mathcal Q}}
\def\P{{\mathcal P}}
\def\Z{{\mathcal Z}}
\def\M{{\mathcal M}}
\def\({\left(} \def\){\right)}
   \def\lw{\left\langle} \def\rw{\right\rangle}
\date{\today}
\begin{document}

\begin{frontmatter}

\title{An invariant class of wave packets for the Wigner transform}

\author{Helge Dietert}
\address{Faculty of Mathematics, University of Cambridge, Cambridge CB3 0WA, United Kingdom, H.G.W.Dietert@maths.cam.ac.uk}

\author{Johannes Keller}
\address{Zentrum Mathematik, Technische Universit\"at M\"unchen, 80290 M\"unchen, Germany, keller@ma.tum.de}

\author{Stephanie Troppmann}
\address{Zentrum Mathematik, Technische Universit\"at M\"unchen, 80290 M\"unchen, Germany, troppman@ma.tum.de}

\begin{abstract}
  Generalised Hagedorn wave packets appear as exact solutions of Schr\"odinger equations
with quadratic, possibly complex, potential, and are given by a polynomial times a
  Gaussian. We show that the Wigner transform of
 generalised Hagedorn wave packets is a wave packet of the same type in phase space.
The proofs build on a parametrisation via Lagrangian frames and
  a detailed analysis of the polynomial prefactors, including
a novel Laguerre connection. Our findings directly imply the recently found
tensor product structure of the Wigner transform of Hagedorn wave packets.
\end{abstract}

\begin{keyword}
Hagedorn wave packets, Wigner functions, multivariate polynomials, Hermite functions
\MSC[2010]{81R30; 
                 65R10, 
                 81S10} 

\end{keyword}
\end{frontmatter}

\section{Introduction}\label{sec:intro}

Hagedorn wave packets are prototypes of quantum wave functions that are highly localised in both, position and momentum space.
They arrive as eigenstates of general multidimensional harmonic oscillators. The $k$-th wave packet, $k \in \N^d$,  can be written as
\begin{equation*}
\phi_k(x) = p_k(x) g(x), \quad x \in \Rd,
\end{equation*}
where $p_k$ is a polynomial of degree $|k|$ and $g$ is a Gaussian with
complex width matrix; see \cite{H80,H98,LT14}.

The favourable properties of Hagedorn wave packets have turned them into an
essential tool for various applications, notably in molecular quantum dynamics and quantum optics.
For instance, they have been used in the analysis of time-dependent
semiclassical Schr\"odinger equations \cite{Ha85,HJ00,LST15},
the construction of numerical methods for high frequency problems \cite{Lu08,FGL09},
and the representation of quasiprobabilities in quantum optics~\cite{W99}.

In the literature, Hagedorn wave packets also coexist as generalised squeezed states; see \cite{CR12}.
These states are generated by applying translation and squeezing operators to the eigenstates of
the standard harmonic oscillator. 
However, this construction and Hagedorn's approach produce, up to a phase factor, the same wave packets, see~\cite{LT14}.

In this paper our aim is to analyse the representation of Hagedorn wave packets on phase space, that is, their Wigner functions, and give
a comprehensive description of the appearing structures.

So far, Wigner functions have been classified for Hermite functions, see \cite{F89, Th93}, by using the well-known
Hermite-Laguerre connection. This link has been generalised in \cite{LT14}, showing that the Wigner functions
of a Hagedorn wave packet factorises regardless of the structure of the wave packet itself. Another approach using ladder operators
in the two-dimensional setting can be found in \cite{DM94}.

Building on the generating function of the wave packets from \cite{Hag15}, we are able to present a general formula
for the Wigner functions of Hagedorn wave packets.
As our main result, in Theorem~\ref{thm:wigner-hagedorn} we show that the wave packets in phase space are given by Hagedorn wave packets of doubled dimension.
This insight allows to lift all qualities of the wave packets to phase space and, thus, enables new perspectives and approaches
in modelling and approximation theory. We find that the structure of the Wigner functions can be explained with the
 properties of the wave packets.
A connection between the polynomial prefactor in the Wigner functions and Laguerre polynomials then reproduces the
previously found factorisation results.


Wigner functions play a central role in microlocal analysis and semiclassical quantum mechanics; see, e.g., \cite{dG11,Z12}.
In particular, various important semiclassical approximations and algorithms for the simulation of
quantum molecular dynamics combine the Wigner function of the initial state with classical dynamics on
phase space, see, e.g., \cite{BR02,LR10,GL14}. Our results can contribute to improving these methods by providing new
methods for the approximation of Wigner functions of more general states, and extending the applicability of wave packet propagation methods to phase space.


\subsection{Outline}
Motivated by the form
the wave packets attain under non-unitary time evolution, in Chapter 2 we start with a generalised definition for
Hagedorn wave packets; see also \cite{LST15}. For our analysis we follow a geometric approach based on
Lagrangian frames. However, all conditions can easily be translated into the classical ones found in \cite{{Lu08}} or \cite{H98}.

In order to find criterions for tensor factorisation, in Chapter 3 we analyse the polynomial prefactors $p_k$ of the wave packets.
We characterise these polynomials by a three-term
recursion relation, a generating function as well as a ladder operator. All definitions are closely related to the
corresponding formulas for Hermite polynomials, and thus suggest an interpretation as multivariate Hermite polynomials.
However, the polynomials are not simple tensor products of univariate Hermite polynomials, but generically exhibit a more
complex structure, as shown in Proposition~\ref{lem:laguerre+reduction}. Nevertheless, our findings can be read as reasonable multivariate
extension of the relations between Hermite and Laguerre polynomials discussed in \cite{Th93}.

In Chapter 4, we use the generating function to identify the Wigner function of Hagedorn wave packets with a wave packet of doubled
dimension. This result combined with our polynomial analysis from Chapter 3 give a detailed picture of Hagedorn wave packets
in phase space that constitutes the core of our manuscript.

Finally, in Chapter 5, we illustrate characteristic examples for both, polynomials and wave packets in two dimensions.


\section{Generalised Hagedorn wave packets}\label{sec:Hag}

In this chapter we adopt the viewpoint of~\cite{LST15}, and briefly sketch the geometric approach to defining Hagedorn wave packets.
Moreover, we extend this framework by decoupling the raising operator from the ground state.
This type of generalised wave packets appears for example in the context of non-selfadjoint evolution problems, see \cite{H95,GS12,LST15}.
\subsection{Lagrangian frames and ground states}\label{sec:GrState}

We consider the classical phase space $T^*\Rd = \Rdd$, equipped with the standard symplectic form
\begin{equation*}
 \Omega = \left(\begin{array}{cc} 0 & -\Id_d \\ \Id_d & 0 \end{array}\right) \in \R^{2d\times 2d}.
\end{equation*}
\begin{definition}
A matrix $Z\in \C^{2d\times d}$ is called a \emph{Lagrangian frame}, if
\begin{equation}\label{eq:isotropy}
Z^T\Omega Z = 0
\end{equation}
and ${\rm rank} \ Z = d$. Furthermore, $Z$ is called \emph{normalised}, if
\begin{equation}\label{eq:normalisation}
\frac{i}2 Z^* \Omega Z = \Id_{d}.
\end{equation}
\end{definition}

In the following, we will denote typical phase space points by $z = (q,p) \in \Rdd$, with $q\in \Rd$ being the position, and
$p\in \Rd$ the momentum variable. Similarly, we write $Z = (Q;P)$ with $Q,P\in \C^{d\times d}$ for Lagrangian frames.
 With this denotation, the isotropy condition \eqref{eq:isotropy} and the normalisation condition \eqref{eq:normalisation}
 coincide with the symplecticity condition in \cite{{Lu08}} or, by taking $Q = A$ and $P= i B$, with
 Hagedorn's original definition in \cite{H98}.

The image of a normalised Lagrangian frame $Z$ is a complex Lagrangian subspace $\rm{range} \ Z \subset \C^{2d}$.
All Lagrangian frames spanning the same subspace $L$ are related by unitary transformations, see, e.g., \cite[\S2.1]{LST15}, and thus
have the same Hermitian square $Z Z^*$. Hence, considering the matrix
\begin{equation}\label{eq:sympl_metric}
G_{Z}=  \Omega^T \Re(ZZ^*) \Omega =\begin{pmatrix} PP^* & -PQ^* + i\Id_d\\ -QP^* - i\Id_d & QQ^*\end{pmatrix}
\end{equation}
is a convenient way to classify the Lagrangian subspace $\rm{range} \ Z$.
The matrix $G_{Z}$ is real, symmetric, positive definite and symplectic, and thus called the symplectic metric associated with the
Lagrangian subspace $\rm{range} \ Z$. The matrix $G_Z$ plays an important role when lifting the wave packets to phase space.

Following the idea of Hagedorn, we can associate a coherent ground state with any normalised Lagrangian frame.
\begin{lemma}[Ground state]
Let $Z=(Q;P)$ be a normalised Lagrangian frame. Then, $Q$ and $P$ are invertible and
$$
\Im\(PQ^{-1}\) = (QQ^{*})^{-1} > 0.
$$
In particular, for $\eps>0$,
\begin{equation}\label{eq:hagedorn_ground}
\phi^{\eps}_0[Z](x) = (\pi \eps)^{-\tfrac{d}{4}} \det(Q)^{-\tfrac{1}{2}} \exp(\tfrac{i}{2\eps} x^T PQ^{-1}x)
\end{equation}
is a square integrable function with $\|\phi^{\eps}_0[Z]\|_{L^2} = 1$.
\end{lemma}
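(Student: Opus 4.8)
The plan is to first translate the coordinate-free conditions \eqref{eq:isotropy} and \eqref{eq:normalisation} into block-matrix identities for $Q$ and $P$, and then read off everything from those. Writing $Z=(Q;P)$ and multiplying out, the isotropy condition \eqref{eq:isotropy} becomes $Q^TP = P^TQ$, i.e.\ $Q^TP$ is (complex) symmetric, while the normalisation \eqref{eq:normalisation} becomes $\tfrac{i}{2}(P^*Q - Q^*P) = \Id_d$, equivalently $P^*Q - Q^*P = -2i\,\Id_d$. Invertibility of $Q$ then follows by a one-line kernel test: if $Qv=0$ for some $v\in\Cd$, then testing the normalisation identity gives $v^*(P^*Q-Q^*P)v = (Qv)^*Pv - v^*P^*(Qv) = 0$, whereas the right-hand side equals $-2i|v|^2$, forcing $v=0$. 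The identical argument applied to $Pv=0$ shows $P$ is invertible as well.

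Next I would establish the two assertions about $PQ^{-1}$. For symmetry, multiply the isotropy identity $Q^TP=P^TQ$ by $(Q^T)^{-1}$ on the left and $Q^{-1}$ on the right to obtain $PQ^{-1} = (Q^T)^{-1}P^T = (PQ^{-1})^T$, so $M:=PQ^{-1}$ is complex symmetric; in particular $\overline{M}=M^*$. For the imaginary part, the key manoeuvre is to conjugate the normalisation identity by $(Q^*)^{-1}$ on the left and $Q^{-1}$ on the right, using $(Q^*)^{-1}Q^{-1}=(QQ^*)^{-1}$. The left-hand side collapses to $(Q^*)^{-1}P^* - PQ^{-1} = M^* - M$, while the right-hand side becomes $-2i(QQ^*)^{-1}$. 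Hence $M - M^* = 2i(QQ^*)^{-1}$, and since $\Im M = \tfrac{1}{2i}(M-\overline{M}) = \tfrac{1}{2i}(M - M^*)$ by symmetry, this yields exactly $\Im(PQ^{-1}) = (QQ^*)^{-1}$. Positivity is immediate: $QQ^*$ is Hermitian with $v^*QQ^*v = |Q^*v|^2 > 0$ for $v\neq 0$ because $Q^*$ is invertible, so $QQ^*>0$ and therefore $(QQ^*)^{-1}>0$.

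Finally, for the $L^2$-normalisation I would compute the modulus squared of \eqref{eq:hagedorn_ground} directly. Since $x$ is real, $\overline{x^TMx}=x^T\overline{M}x$ and $M-\overline{M}=2i\,\Im M$, so the Gaussian phase turns into a genuine decay, $|\phi^{\eps}_0[Z](x)|^2 = (\pi\eps)^{-d/2}|\det Q|^{-1}\exp(-\tfrac{1}{\eps}\,x^T(QQ^*)^{-1}x)$, which is integrable precisely because $(QQ^*)^{-1}>0$. Evaluating the standard real Gaussian integral with covariance $\eps\,QQ^*$ produces the factor $(\pi\eps)^{d/2}\sqrt{\det(QQ^*)} = (\pi\eps)^{d/2}|\det Q|$, using $\det(QQ^*)=|\det Q|^2$, and this cancels the prefactor exactly to give $\|\phi^{\eps}_0[Z]\|_{L^2}=1$. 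The argument is essentially routine linear algebra followed by a single Gaussian integral; the only point demanding care is the bookkeeping of signs and factors of $i$ in the conjugation step that extracts $\Im(PQ^{-1})$, together with matching the constants so that the normalising prefactor cancels the Gaussian integral precisely.
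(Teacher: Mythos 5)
Your proof is correct and follows exactly the standard route that the paper itself only sketches (it delegates the details to the reference [Lu08]): isotropy gives $Q^TP=P^TQ$ and hence symmetry of $PQ^{-1}$, normalisation gives $P^*Q-Q^*P=-2i\,\Id_d$ and hence invertibility of $Q,P$ and $\Im(PQ^{-1})=(QQ^*)^{-1}>0$, and the Gaussian integral then yields $\|\phi_0^\eps[Z]\|_{L^2}=1$. The only blemishes are cosmetic: the two terms in your kernel-test identity are transposed in sign (harmless, since both vanish when $Qv=0$), and it is worth noting explicitly that $(QQ^*)^{-1}$ is real and symmetric --- which follows from its identification with $\Im(PQ^{-1})$ for symmetric $PQ^{-1}$ --- so that the real Gaussian integral formula applies.
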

A rigorous proof of this result can for example be found in \cite{Lu08}. Briefly, isotropy \eqref{eq:isotropy} ensures $\phi^{\eps}_0[Z] \in L^2(\R^d)$, while $\phi^{\eps}_0[Z]$ is normalised if and only if $Z$ is normalised.
\subsection{Excited states and spectral properties}\label{sec:ExStates}

Let $\widehat{q}$ denote the position operator and $\widehat{p}$ the momentum operator.
Analogously to  phase space points, we write $\widehat{z} = \begin{pmatrix} \widehat{q} , \widehat{p}\end{pmatrix}$.
Using this notation, we can define a linear operator
\begin{equation}\label{eq:ladder_pos}
A^{\dagger}[Y]= \tfrac{i}{\sqrt{2 \eps}} Y^* \Omega \widehat{z} 
\end{equation}
associated with a normalised Lagrangian frame $Y$.
The components of $A^{\dagger}[Y]$ commute due to the isotropy of $Y$.
Starting from a ground state $\phi^\eps_0[Z]$, the generalised Hagedorn wave packets $\phi^\eps_k[Z,Y]$ then are constructed
via $A^{\dagger}[Y]$ as follows.
\begin{definition}[Generalised wave packets]
Let $k \in \N^d$ and $Z,Y\in \C^{2d\times d}$ be normalised Lagrangian frames. Then, the $k$-th Hagedorn wave packet is defined as
\begin{equation}\label{eq:hagedorn_raising}
\phi^{\eps}_k[Z,Y] = \tfrac{1}{\sqrt{k!}} (A^{\dagger}[Y])^k \phi^{\eps}_0[Z]
\end{equation}
with standard multiindex notation, $(A^{\dagger}[Y])^k = A^{\dagger}[Y]_1^{k_1}\ldots A^{\dagger}[Y]_d^{k_d}$.
\end{definition}
Based on this definition we will refer to $A^{\dagger}[Y]$ as raising operator.

If $Y=Z$, the above definition yields the standard Hagedorn wave packets $\{\phi_k^{\eps}[Z]\}_{k \in \N^d}$ that
form an orthonormal basis of $L^2(\Rd)$. In this case, the adjoint operator of  $A^{\dagger}[Z]$,
\[
A[Z]= -\tfrac{i}{\sqrt{2 \eps}} Y^T \Omega \widehat{z}, \quad A_j[Z] \phi^{\eps}_{k}[Z] = \sqrt{k_j}\phi^{\eps}_{k-e_j}[Z],
\]
plays the role of a lowering operator.
This property does not hold for the generalised wave packets.

The construction of Hagedorn wave packets by means of $A^\dagger[Y]$ implies
\begin{equation*}
\phi_k^{\eps}[Z,Y] = p_k^\eps[Z,Y] \phi_0^{\eps}[Z],
\end{equation*}
where $p_k^\eps[Z,Y]$, $k\in \N^d$, is a polynomial of total degree $|k|$.
\begin{proposition}\label{prop:polynomial_hagedorn_trafo}
  Let $Z = (Q;P)\in \C^{2d \times d} $ and $Y= (X;K) \in \C^{2d \times d} $ be two normalised Lagrangian frames and $B=\tfrac i2 Z^*\Omega Y$. Then, for $k\in \N^d$ it holds
  \begin{equation}
    \label{eq:polynomial_hagedorn_trafo}
    \phi_k^{\eps}[Z,Y](x) = \frac1{\sqrt{2^{|k|}k!}}
    q_k^M \( \tfrac{1}{\sqrt{\eps}} B^* Q^{-1}x \)  \phi_0^{\eps}[Z](x)
  \end{equation}
  with $M = \tfrac{1}{4} Y^* G_{Z} \overline{Y} + B^* Q^{-1} \overline{QB}$  and
the polynomials $\{q^M_k\}_{k\in \N^d}$ that are recursively defined by $q^M_0 \equiv 1$ and
  \begin{equation*}
  (q^M_{k+e_j} (x) )_{j=1}^d = 2 x q^M_k(x) - 2 M\cdot (k_jq^M_{k-e_j}(x))_{j=1}^d.
  \end{equation*}
  In the special case $Y=Z$ we find $B= \Id$ and $M= Q^{-1}\overline{Q}$.
\end{proposition}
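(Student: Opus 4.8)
The plan is to argue by induction on $|k|$, transferring the recursive construction \eqref{eq:hagedorn_raising} into a recursion for the polynomial prefactors expressed in the rescaled variable $w := \eps^{-1/2} B^* Q^{-1} x$. First I would compute how a single raising operator acts on a product $p(x)\,\phi_0^{\eps}[Z](x)$ with $p$ a polynomial. Writing $\widehat{q} = x$, $\widehat{p} = -i\eps\nabla_x$ and $Y=(X;K)$ gives $A^{\dagger}[Y] = \tfrac{i}{\sqrt{2\eps}}(K^* x + i\eps X^*\nabla_x)$. When $\nabla_x$ meets the Gaussian it produces the factor $\tfrac{i}{\eps}PQ^{-1}x$, where $PQ^{-1}$ is symmetric by the isotropy \eqref{eq:isotropy} of $Z$. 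Collecting the two multiplicative terms and using $B=\tfrac{i}{2}Z^*\Omega Y$ in the form $K^* - X^* PQ^{-1} = (K^*Q - X^*P)Q^{-1} = -2i\,B^* Q^{-1}$, the position-type contribution simplifies to $\sqrt{2/\eps}\,(B^*Q^{-1}x)\,p = \sqrt{2}\,w\,p$, while the chain rule turns the derivative contribution into $-\tfrac{1}{\sqrt2}M\nabla_w p$ with $M = X^*(B^*Q^{-1})^T$. Thus componentwise $A_j^{\dagger}[Y](p\,\phi_0^{\eps}[Z]) = \tfrac{1}{\sqrt2}\bigl(2w_j p - (M\nabla_w p)_j\bigr)\phi_0^{\eps}[Z]$, i.e. $A^{\dagger}[Y]$ acts on prefactors as $\tfrac{1}{\sqrt2}\,\mathcal{A}$ with the creation operator $\mathcal{A}_j := 2w_j - \sum_i M_{ji}\partial_{w_i}$. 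Iterating and tracking the accumulated factors $(\sqrt2)^{-|k|}$ and $(k!)^{-1/2}$ from \eqref{eq:hagedorn_raising} produces exactly the constant $(2^{|k|}k!)^{-1/2}$, so it remains to identify $\mathcal{A}^k 1$ with $q_k^M$.

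For that identification I would first show $M = X^*(B^*Q^{-1})^T$ is symmetric. Since the components of $A^{\dagger}[Y]$ commute by the isotropy of $Y$, the induced operators satisfy $[\mathcal{A}_i,\mathcal{A}_j] = 2(M_{ji} - M_{ij}) = 0$, which forces $M_{ij} = M_{ji}$. Symmetry makes the $\mathcal{A}_j$ commute, and together with the canonical commutator $[\partial_{w_i},\mathcal{A}_j] = 2\delta_{ij}$ it yields $\partial_{w_i}(\mathcal{A}^k 1) = 2k_i\,\mathcal{A}^{k-e_i}1$. Feeding this derivative identity back into $\mathcal{A}_j(\mathcal{A}^k 1) = 2w_j\,\mathcal{A}^k 1 - \sum_i M_{ji}\partial_{w_i}(\mathcal{A}^k 1)$ reproduces precisely the stated three-term recursion, so $\mathcal{A}^k 1 = q_k^M$ by induction, establishing \eqref{eq:polynomial_hagedorn_trafo}.

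The main obstacle is the purely algebraic identification of the coefficient matrix, namely that the $M = X^*(B^*Q^{-1})^T$ coming out of the raising operator coincides with the stated $M = \tfrac14 Y^* G_Z\overline Y + B^*Q^{-1}\overline{QB}$. Exploiting the symmetry of $M$ to rewrite the first expression as $B^*Q^{-1}\overline{X}$, this amounts to $\tfrac14 Y^* G_Z\overline Y = B^*Q^{-1}\overline{(X - QB)}$. Using \eqref{eq:sympl_metric} together with $\Re(ZZ^*) = \tfrac12(ZZ^* + \overline Z Z^T)$, I would verify it by expanding in the blocks $X,K,Q,P$ and repeatedly invoking the normalisation relations $X^*K - K^*X = 2i\,\Id$ and $Q^*P - P^*Q = 2i\,\Id$, the definition of $B$, and the isotropy identity $Z^*\Omega\overline Z = \overline{Z^T\Omega Z} = 0$.

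Finally, the special case $Y = Z$ is immediate from these ingredients: the normalisation \eqref{eq:normalisation} gives $B = \tfrac{i}{2}Z^*\Omega Z = \Id$; the isotropy identity $Z^*\Omega\overline Z = 0$ collapses $Z^* G_Z\overline Z$ to $0$; and $B^*Q^{-1}\overline{QB}$ reduces to $Q^{-1}\overline Q$, so $M = Q^{-1}\overline Q$. I expect the bookkeeping of transposes and conjugates in the block computation of the third step, rather than any conceptual difficulty, to be the most delicate part; the rest is a clean translation of the operator recursion into the polynomial recursion.
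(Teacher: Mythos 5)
Your proposal is correct and follows essentially the same route as the paper's proof in Appendix~A: induction on $|k|$, conjugating $A^\dagger[Y]$ past the Gaussian to obtain the prefactor operator $\tfrac{1}{\sqrt2}(2w - M\nabla_w)$ with $M = X^*(B^*Q^{-1})^T$, identifying it with the polynomial raising operator of Lemma~\ref{lem:raising_operator}, deducing symmetry of $M$ from the commutativity of the components of $A^\dagger[Y]$, and finally matching $M$ with the stated expression by a block computation using isotropy and normalisation. The only step you leave as a sketch --- verifying $\tfrac14 Y^*G_Z\overline Y = B^*Q^{-1}\overline{(X-QB)}$ --- is precisely the algebraic identity the paper's appendix works out, with the same ingredients you list.
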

Proposition~\ref{prop:polynomial_hagedorn_trafo} is easily
  proven by identifying the found
  raising operator with the recursive definition of $\{\phi^\eps_k\}_k$, see
  Appendix \ref{app:Proof}. A similar result for special choices
of $Y$ can be found in \cite{LST15}.
Proposition~\ref{prop:polynomial_hagedorn_trafo}  reveals that
the structure of the Hagedorn wave packets can be understood
  by studying the polynomial prefactors $\{q^M_k\}_k$.  This polynomial
  analysis is conducted in the next section.

In this chapter we only introduced wave packets centered at
  the origin.
 In general, one  obtains wave packets centered at any phase space point
  $z=(q,p) \in \R^{2d}$
  by applying the linear Heisenberg-Weyl operator
\begin{equation*}
(T_z \psi)(x) = \e^{ip^T(x-q/2)/\eps} \psi(x-q)
\end{equation*}
to the corresponding wave packet centered at the origin.

Also in the remaining parts of this paper, we  only
present our analysis for the wave packets
centered at the origin. However, we stress that
all of our results and proofs easily adapt to the more general case.

%
%

\section{Analysis of the polynomial prefactor}

In this chapter we analyse the polynomial prefactors of Hagedorn's wave packets, which are
defined for a symmetric and unitary matrix $M \in \C^{d\times d}$ via the three-term
recursion relation (TTRR)
\begin{equation}
\label{eq:general_TRR}
( q^M_{k+e_j} (x) )_{j=1}^d = 2 x q^M_k(x) - 2 M\cdot (k_jq^M_{k-e_j}(x))_{j=1}^d,
\end{equation}
see Proposition~\ref{prop:polynomial_hagedorn_trafo},
where $e_j$ denotes the $j$-th unit vector in $\Rd$ with boundary
conditions $q^M_0\equiv 1$ and $q^M_\ell \equiv 0$ for all $\ell\notin \N^d$. This
recursion is well-defined as $M$ is symmetric. Equivalently, the
polynomials $q^M_k$ could be defined via their generating function or
raising operator, both of which we derive in this section.

Surprisingly, this class of polynomials seems to be  little studied in the literature
  found by the authors. A noteable exception is \cite{E39} who defined them
  through their generating function and showed a generalised Mehler formula.
 In \cite{DM94} one can find results for the two-dimensional case.

\subsection{Generating function and raising operator}
To get an intuition, we quickly discuss the univariate polynomials first.
In one dimension \eqref{eq:general_TRR} simplifies to
$$
H^{\lambda}_{n+1}(x) = 2x H^{\lambda}_n(x)-2 \lambda n H^{\lambda}_{n-1}(x)
$$
with $\lambda \in \R$. Starting from $H^{\lambda}_0 \equiv 1$
and $H^{\lambda}_{-1} \equiv 0$, for $\lambda = 1$ we produce the usual Hermite polynomials. For $\lambda \neq 0$ we find rescaled Hermite polynomials
$$
H^{\lambda}_n = \lambda^{\tfrac{n}{2}} H^1_n\(\frac{x}{\sqrt{\lambda}}\).
$$
To complete the picture, for $\lambda = 0$, we generate monomials $H^0_n = (2x)^n$.

In the multivariate case the polynomials may emerge as simple tensor products of Hermite
polynomials, but typically attain a more involved structure. Recently,
in \cite{Hag15}, there has been published a formula for the generating
function of (rescaled) polynomials of type \eqref{eq:general_TRR}.
\begin{lemma}[Generating function] \label{lem:gen_func}
Let $M\in \C^{d \times d}$ be symmetric and unitary. Then, the generating
function of the polynomials $\{q^M_k\}_{k \in \N^d}$ is given by
\begin{equation}\label{eq:genfct}
f(x,t) =\sum_{k \in \N^d} \frac{t^k}{k!} q_k(x) = \exp(2 x^T t - t ^T Mt).
\end{equation}
\end{lemma}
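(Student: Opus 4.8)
The plan is to verify that the claimed closed form has Taylor coefficients obeying the recursion \eqref{eq:general_TRR}, since those coefficients together with $q^M_0 \equiv 1$ determine the $q^M_k$ uniquely. I set $g(x,t) = \exp(2 x^T t - t^T M t)$ and expand $g(x,t) = \sum_{k \in \N^d} \frac{t^k}{k!}\, r_k(x)$ with $r_k(x) = (\partial_t^k g)(x,0)$; the goal is to show $r_k = q^M_k$ for every $k$.

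First I would differentiate. Because $M$ is symmetric, $\partial_{t_j}(t^T M t) = 2(Mt)_j$, so $g$ satisfies the first-order system
\[
\partial_{t_j} g(x,t) = \bigl(2 x_j - 2 (Mt)_j\bigr)\, g(x,t), \qquad j = 1,\dots,d.
\]
Next I would read off what this says about the coefficients $r_k$. On the left, reindexing gives $\partial_{t_j} g = \sum_k \frac{t^k}{k!}\, r_{k+e_j}$, so the coefficient of $t^k/k!$ is $r_{k+e_j}$. On the right, the coefficient of $t^k/k!$ in $2 x_j g$ is $2 x_j r_k$, while from $t_i g = \sum_n \frac{t^n}{(n-e_i)!} r_{n-e_i}$ the coefficient of $t^k/k!$ in $-2\sum_i M_{ji} t_i g$ is $-2 \sum_i M_{ji}\, k_i\, r_{k-e_i}$. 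Matching coefficients yields
\[
r_{k+e_j}(x) = 2 x_j\, r_k(x) - 2 \sum_{i=1}^d M_{ji}\, k_i\, r_{k-e_i}(x),
\]
which is exactly the componentwise form of \eqref{eq:general_TRR}, and $r_0 = g(x,0) = 1 = q^M_0$ supplies the initial datum. An induction on $|k|$ then gives $r_k = q^M_k$, proving the formula.

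The computation is routine multi-index bookkeeping, and the one place to be careful is the compatibility of the $d$ first-order equations: recovering a single closed-form exponential from the recursion requires that the vector field $t \mapsto 2x - 2 M t$ be conservative, i.e. $\partial_{t_i}\bigl(2 x_j - 2(Mt)_j\bigr) = \partial_{t_j}\bigl(2 x_i - 2(Mt)_i\bigr)$. This is precisely where symmetry of $M$ enters, and it is the only structural hypothesis actually used here—unitarity of $M$ plays no role in this lemma. If one prefers to argue in the forward direction, I would instead define $f(x,t) = \sum_k \frac{t^k}{k!}\, q^M_k(x)$, derive $\partial_{t_j} f = \bigl(2 x_j - 2(Mt)_j\bigr) f$ directly from \eqref{eq:general_TRR} by the same reindexing, and integrate this gradient system using the symmetric potential $2 x^T t - t^T M t$, fixing the integration constant by $f(x,0) = q^M_0 = 1$.
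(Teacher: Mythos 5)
Your proof is correct. Note that the paper itself does not prove Lemma~\ref{lem:gen_func}: it attributes the formula to the reference \cite{Hag15} and states it without argument, so your coefficient-matching verification fills a gap rather than paralleling an existing proof. The argument is the natural one — differentiate $\exp(2x^Tt - t^TMt)$ in $t_j$, use symmetry of $M$ to get $\partial_{t_j}(t^TMt) = 2(Mt)_j$, and match Taylor coefficients against the componentwise form of the recursion \eqref{eq:general_TRR}; the reindexing $t_ig = \sum_n \frac{t^n}{n!}\, n_i\, r_{n-e_i}$ correctly reproduces the factor $k_i$ and automatically respects the boundary convention $q^M_\ell \equiv 0$ for $\ell \notin \N^d$. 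Your observation that only symmetry of $M$ is used, not unitarity, is accurate and consistent with the rest of the paper (Proposition~\ref{lem:laguerre+reduction}, for instance, assumes only symmetry), and your remark about the integrability of the gradient system is exactly the point the authors allude to when they say the recursion is well defined because $M$ is symmetric. The one small thing worth making explicit in the induction is that a multi-index $m$ with $|m|\geq 1$ may be written as $k+e_j$ in several ways, but since the $q^M_k$ are (by the well-definedness of the recursion) independent of that choice, any choice suffices to conclude $r_m = q^M_m$.
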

To give a more profound characterisation of the polynomials,
we also look at the raising operator and the gradient
formula for the polynomials $q^M_k$.  In the language of Dirac
ladders, the gradient plays the role of a lowering operator or
annihilator for the polynomials.

\begin{lemma}[Ladder operators]\label{lem:raising_operator}
Let $b_M^\dagger = 2x - M\nabla_x$ and $k\in \N^d$. Then, it holds
\begin{equation}
(q^M_{k + e_j})^d_{j=1} = b_M^\dagger q^M_{k}~\text{\quad\quad~and~
\quad\quad~}\nabla q^M_k = 2 (k_j q^M_{k-e_j})^d_{j=1}.
\end{equation}
\end{lemma}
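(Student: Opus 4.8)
The plan is to read off both identities from the generating function in Lemma~\ref{lem:gen_func}, treating
$f(x,t)=\exp(2x^Tt-t^TMt)=\sum_{k\in\N^d}\tfrac{t^k}{k!}q_k^M(x)$
as an identity of entire functions of $t$ for each fixed $x$, so that comparison of the coefficients of $t^k/k!$ is legitimate. The gradient (lowering) formula is the real content; once it is in hand, the raising formula will follow by a one-line substitution into the three-term recursion \eqref{eq:general_TRR}.

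First I would establish $\nabla q_k^M=2(k_j q_{k-e_j}^M)_{j=1}^d$. Differentiating the closed form of $f$ in $x_i$ gives $\partial_{x_i}f=2t_i\,f$. Expanding the right-hand side, $2t_i f=2\sum_{k}\tfrac{t^{k+e_i}}{k!}q_k^M=2\sum_{\ell}\tfrac{t^{\ell}}{\ell!}\,\ell_i\,q_{\ell-e_i}^M$, where the reindexing $\ell=k+e_i$ uses the multi-index bookkeeping $\tfrac{1}{k!}=\tfrac{\ell_i}{\ell!}$ (since $\ell!=\ell_i\,(\ell-e_i)!$). Matching this against $\partial_{x_i}f=\sum_{\ell}\tfrac{t^{\ell}}{\ell!}\,\partial_{x_i}q_\ell^M$ yields $\partial_{x_i}q_\ell^M=2\ell_i\,q_{\ell-e_i}^M$ for every $i$, which is exactly the gradient formula; the boundary convention $q_m^M\equiv0$ for $m\notin\N^d$ absorbs the terms with $\ell_i=0$.

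For the raising relation I would simply insert the gradient formula into the definition of $b_M^\dagger$: $b_M^\dagger q_k^M=2x q_k^M-M\nabla q_k^M=2x q_k^M-2M\cdot(k_j q_{k-e_j}^M)_{j=1}^d$, and the right-hand side is precisely the right-hand side of the recursion \eqref{eq:general_TRR} defining $q_{k+e_j}^M$. Hence $(q_{k+e_j}^M)_{j=1}^d=b_M^\dagger q_k^M$, completing the proof. As an independent cross-check one can derive the same identity straight from the generating function by differentiating in $t_j$: here $\partial_{t_j}f=(2x_j-2(Mt)_j)\,f$, which reproduces \eqref{eq:general_TRR} directly. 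This step is where the symmetry of $M$ is indispensable, since it is needed for $\partial_{t_j}(t^TMt)=2(Mt)_j$. I expect the only delicate point to be the multi-index factorial bookkeeping in the reindexing and the careful treatment of the boundary terms $\ell_i=0$; everything else is formal manipulation of the power series, justified because $f$ is entire in $t$ for fixed $x$.
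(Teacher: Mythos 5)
Your proof is correct and follows essentially the same route as the paper, namely reading both identities off the generating function via $\nabla_x f = 2tf$ and the $t$-derivative, combined with the recursion \eqref{eq:general_TRR}. The only (harmless) difference is the order: you prove the gradient formula first and deduce the raising relation from the TTRR, whereas the paper obtains the raising relation first from $\nabla_t f = b_M^\dagger f$ and then infers the gradient formula — your order has the minor advantage of not needing the invertibility of $M$ to cancel it from $M\nabla q_k^M = 2M(k_j q_{k-e_j}^M)_{j=1}^d$.
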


\begin{proof}
  The generating function satisfies $\nabla_x f = 2t f$ and
  \begin{equation*}
    \nabla_t f = 2x f - M \nabla_x f =b_M^\dagger f.
  \end{equation*}
  By the definition of the generating function $f(x,t) $ 
  it follows that
   $$
  \partial_{t_j}f(x,t) = \sum_k  \frac{t^{k-e_j}}{(k-e_j)!} q^M_k(x) = \sum_k  \frac{t^{k}}{k!} q^M_{k+e_j}(x),
  $$
  i.e. $\partial_{t_j}f$ simply shifts the summation index by one into the direction of $e_j$.
  The gradient formula then follows immediately from
  (\ref{eq:general_TRR}).
\end{proof}

As a consequence of~\eqref{eq:general_TRR} and Lemma~\ref{lem:raising_operator},
the polynomials satisfy the Rodrigues formula
\begin{eqnarray}
\label{eq:rod}
q^M_k(x) =\exp\left(x^T M^{-1} x\right)(-M \nabla )^k \exp\left(-x^T M^{-1} x\right).
\end{eqnarray}


\subsection{Factorisation and eigenvectors}
The structure of the multivariate polynomials $q^M_k$ crucially
depends on the matrix $M$ in the TTRR~\eqref{eq:general_TRR}. In fact, the polynomials $q^M_k$ factorise for all $k\in\N^d$ if and
only if their generating function factorises. We summarize this observation in the following remark.


\begin{remark}\label{lem:factorisation}
By inspecting Lemma~\ref{lem:gen_func} one observes that the generating function
of the polynomials $\{q^M_k\}_{k \in \N^d}$
factorises into $m$ lower-dimensional generating functions of the
same type if and only if there is a relabeling of the coordinates such
that $M$ is block diagonal with $m$ blocks. In this case,
the polynomials $q^M_k$ are tensor products of $m$ lower-variate
polynomials of the same type for all $k\in\N^d$.

In particular, the polynomials $q^M_k$ are tensor products of
univariate polynomials for all $k$ if $M$ is diagonal.  On the contrary, if $M$ has nonzero offdiagonal entries,  the polynomials
$q_k^M$ are not simple tensor products anymore.
\end{remark}


The polynomials $q^M_k$ form a set of simultaneous eigenvectors for the operators
\[
T_j = \frac{(b_M^\dagger)_j \d_{x_j} +  \d_{x_j}(b_M^\dagger)_j }2 = (1+2x_j\d_{x_j}) - \d_{x_j} (M\nabla)_j~,~j=1,\hdots,d,
\]
acting, for example, on the space of polynomials on $\Rd$. This can easily be seen from
\begin{align*}
T_j q^M_k &=\tfrac12\( (b_M^\dagger)_j 2k_j q^M_{k-e_j} +  \d_{x_j} q^M_{k+e_j} \) \\
& = (2k_j +1)q^M_k,
\end{align*}
similarly as in the computation of harmonic oscillator eigenvalues.
Hence, $q^M_k$ is an eigenvector belonging to the eigenvalue $2k_j +1$ of
$T_j$.  Moreover, for $j\neq k$, the operators $T_j$ and $T_k$ commute
if and only if $M_{jk} = M_{kj} = 0$. In other words, by the remark above, commutation of the operators
reflects the factorisation of the polynomials.

\subsection{Laguerre connection}

As noted in Remark~\ref{lem:factorisation}, the polynomials~$q_k^M$
factorise for all $k$ if and only if $M$ is block-diagonal.
Therefore, our aim is to express the general polynomials as a
linear combination of tensor products by deleting off-diagonal entries
of $M$.

It turns out, that $q_k^M$ can be rewritten as a Laguerre polynomial
\begin{equation*}
L_n^{(\alpha)}(x) = \sum^n_{j=0} \left(\begin{array}{c}n+\alpha \\
n-j \end{array}\right) \frac{1}{j!}(-x)^j~,~~n\in \N~,~~\alpha\geq 0,
\end{equation*}
of the raising operators corresponding to the polynomials generated by
a reduced matrix applied to $1$.
This generalised Laguerre connection adds a new point to the long list
of relations between Hermite and Laguerre polynomials, see, e.g.,~\cite{Th93}.
More precisely, we express $q_k^M$ via raising operators
$b^\dagger_{M[n,m]}$, where $M[n,m]$ denotes the matrix $M$ with
deleted offdiagonal entries $M_{nm}$ and $M_{mn}$, i.e.
\begin{equation*}
  \(M[n,m] \)_{ij}=
  \begin{cases}
    0~, & \{i,j\} = \{n , m\}, \\
    M_{ij}~, & \text{otherwise}.
  \end{cases}
\end{equation*}

\begin{proposition}[Laguerre connection]\label{lem:laguerre+reduction}
Let $M\in \C^{d\times d}$ be symmetric, and
$M_{nm} = \lambda \neq 0$ for some $n \neq m$. Suppose $k\in \N^d$ with
$k_n \geq k_m$. Then,
\begin{align}\label{eq:laguerre_eliminate}
q_k^M(x) &= \(b_M^\dagger\)^k 1  = \nonumber\(c^\dagger\)^{k - k_m(e_n+e_m)}
(-2\lambda)^{k_m} k_m!
L^{(k_n-k_m)}_{k_m}\(\tfrac{1}{2\lambda} c_n^\dagger c_m^\dagger \) 1,
\end{align}
where $c^\dagger=b_{M[n,m]}^\dagger$
denotes the polynomial raising operator for the reduced matrix $M[n,m]$.
The case $k_n < k_m$ is analogous.
\end{proposition}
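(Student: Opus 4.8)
The plan is to collapse everything onto the two coordinates $x_n,x_m$ and there recognise a Laguerre sum. The starting point is that the reduced operator $c^\dagger=b_{M[n,m]}^\dagger$ differs from $b_M^\dagger$ only in its $n$-th and $m$-th components: since $M[n,m]$ agrees with $M$ except that the entry $\lambda$ in positions $(n,m),(m,n)$ is set to zero, one has $(b_M^\dagger)_j=c_j^\dagger$ for $j\notin\{n,m\}$, while
\[
(b_M^\dagger)_n = c_n^\dagger - \lambda\,\d_{x_m},\qquad (b_M^\dagger)_m = c_m^\dagger - \lambda\,\d_{x_n}.
\]
I would record the relevant commutation relations first: for any symmetric matrix $M'$ one has $[\d_{x_i},(b_{M'}^\dagger)_j]=2\delta_{ij}$ (only the $2x$-term contributes), and the components of $b_{M'}^\dagger$ mutually commute, which is exactly the symmetry of $M'$ already used for Lemma~\ref{lem:raising_operator}. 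Applied to $M[n,m]$ this gives that $c_n^\dagger$ and $c_m^\dagger$ commute, that $[\d_{x_m},c_m^\dagger]=[\d_{x_n},c_n^\dagger]=2$, and that $c_n^\dagger$ commutes with $\d_{x_m}$ while $c_m^\dagger$ commutes with $\d_{x_n}$ (both off-diagonal $\delta$'s vanish).

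Since all components of $b_M^\dagger$ commute, I would write $q_k^M=(b_M^\dagger)^k 1$ with the directions $j\notin\{n,m\}$ pulled to the front, where they equal $\prod_{j\neq n,m}(c_j^\dagger)^{k_j}$, so the problem reduces to evaluating $(b_M^\dagger)_n^{k_n}(b_M^\dagger)_m^{k_m}1$ in the variables $x_n,x_m$. Abbreviating $a^\dagger=c_n^\dagger$ and $b^\dagger=c_m^\dagger$, the first simplification is $(b_M^\dagger)_m^{k_m}1=(b^\dagger-\lambda\,\d_{x_n})^{k_m}1=(b^\dagger)^{k_m}1$: because $b^\dagger$ and $\d_{x_n}$ commute the binomial theorem applies and every term carrying a power of $\d_{x_n}$ kills $1$. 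Expanding $(b_M^\dagger)_n^{k_n}=(a^\dagger-\lambda\,\d_{x_m})^{k_n}$ binomially (again $a^\dagger$ and $\d_{x_m}$ commute) and inserting the evaluation $\d_{x_m}^{\,j}(b^\dagger)^{k_m}1=2^j\tfrac{k_m!}{(k_m-j)!}(b^\dagger)^{k_m-j}1$ (a consequence of $[\d_{x_m},b^\dagger]=2$ and $\d_{x_m}1=0$), I obtain a single finite sum over $j$ with coefficients $\binom{k_n}{j}(-2\lambda)^j\tfrac{k_m!}{(k_m-j)!}$ multiplying $(a^\dagger)^{k_n-j}(b^\dagger)^{k_m-j}1$, where $[a^\dagger,b^\dagger]=0$ lets me write the operator part as $(a^\dagger b^\dagger)^{k_m-j}$ up to the leftover $(a^\dagger)^{k_n-k_m}$.

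Finally I would factor $(a^\dagger)^{k_n-k_m}$ to the front (legitimate since $k_n\geq k_m$, so all exponents stay nonnegative) and reindex by $j'=k_m-j$. After collecting the factorials the coefficient of $\bigl(\tfrac{1}{2\lambda}a^\dagger b^\dagger\bigr)^{j'}$ becomes $(-2\lambda)^{k_m}k_m!\,\binom{k_n}{k_m-j'}\tfrac{(-1)^{j'}}{j'!}$, which is precisely the $j'$-th term of $(-2\lambda)^{k_m}k_m!\,L^{(k_n-k_m)}_{k_m}\!\bigl(\tfrac{1}{2\lambda}a^\dagger b^\dagger\bigr)$ by the stated definition of $L^{(\alpha)}_n$. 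Reinstating the factors $\prod_{j\neq n,m}(c_j^\dagger)^{k_j}$ together with $(a^\dagger)^{k_n-k_m}=(c_n^\dagger)^{k_n-k_m}$ assembles exactly $(c^\dagger)^{k-k_m(e_n+e_m)}$ (again using that the $c_j^\dagger$ all commute, so the ordering in the multiindex is immaterial), giving the claimed identity. I expect the only genuinely delicate step to be the bookkeeping here: matching the binomial and factorial factors after reindexing to the explicit Laguerre coefficients and keeping the signs of $\lambda$ straight. Everything else is forced by the commutator relations, and the case $k_n<k_m$ follows by exchanging the roles of $n$ and $m$.
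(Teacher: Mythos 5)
Your proof is correct, and it reaches the paper's key intermediate identity \eqref{eq:ladder_matrix_delete} by a genuinely different route. The paper factorises the generating function, $f_M(x,t)=f_{M[n,m]}(x,t)\exp(-2\lambda t_nt_m)$, expands the exponential, and reads off the coefficient of $t^k$ in the product of the two power series; you instead work entirely at the operator level, splitting $(b_M^\dagger)_n=c_n^\dagger-\lambda\partial_{x_m}$ and $(b_M^\dagger)_m=c_m^\dagger-\lambda\partial_{x_n}$, applying the binomial theorem (legitimate by the commutators you record), and using $\partial_{x_m}^{\,j}(c_m^\dagger)^{k_m}1=2^j\tfrac{k_m!}{(k_m-j)!}(c_m^\dagger)^{k_m-j}1$ --- which is really the gradient/lowering formula of Lemma~\ref{lem:raising_operator} for the reduced matrix --- to produce exactly the coefficients $\tfrac{k_n!k_m!(-2\lambda)^j}{j!(k_n-j)!(k_m-j)!}$ of \eqref{eq:ladder_matrix_delete}. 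From there the reindexing $\ell=k_m-j$ and the matching against the explicit Laguerre coefficients is the same as in the paper, and your bookkeeping checks out, including the sign bookkeeping $(-1)^{k_m+j'}(2\lambda)^{k_m-j'}=(-2\lambda)^{k_m-j'}$. The trade-off: the paper's generating-function argument is shorter once Lemma~\ref{lem:gen_func} is available and iterates painlessly to delete several off-diagonal entries at once (as exploited in Appendix~\ref{app:Tensor}), whereas your argument is more elementary and self-contained (no generating function needed) and makes the mechanism transparent --- the only coupling between directions $n$ and $m$ is the single commutator $[\partial_{x_m},c_m^\dagger]=2$, whose iterated action produces the falling factorial that becomes the Laguerre coefficient. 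One cosmetic point: the phrase ``in the variables $x_n,x_m$'' is loose, since $c_n^\dagger$ and $c_m^\dagger$ still contain derivatives in the other variables whenever $M$ has further nonzero off-diagonal entries; but nothing in your argument uses such a restriction --- only the commutation relations --- so this does not affect correctness.
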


\begin{proof}
Let $f_M$ denote the generating function of the polynomials $q_k^M$ derived in
Lemma~\ref{lem:gen_func}, and define $k[n,m] = k-e_nk_n - e_m k_m$. Then, one computes the series expansion in $t$ as
 \begin{align*}
    f_M(x,t) =& f_{M[n,m]}(x,t) \exp(-2\lambda t_nt_m) \\
    =& \(\sum_{k\in \N^d} \frac{t^k}{k!} \(c^\dagger\)^k \)\cdot
    \( 1- 2\lambda t_nt_m + \tfrac{1}{2!}(2\lambda t_nt_m)^2 - \hdots   \) \\
    =& \sum_{k\in \N^d} \frac{t^k \(c^\dagger\)^{k[n,m] } }{(k[n,m])!}
     \(
     \sum^{\min(k_n,k_m)}_{j=0}  \frac{(-2 \lambda)^j}{j!}
     \cdot  \frac{(  c_n^\dagger)^{k_n-j} }{(k_n-j)!}\cdot
     \frac{( c_m^\dagger)^{k_m-j}}{(k_m-j)!}
      \ 1
      \).
      \end{align*}
  Due to the definition of the generating function, this implies
  \begin{equation}\label{eq:ladder_matrix_delete}
 q_k^M(x) = \(c^\dagger\)^{k[n,m]} \sum_{j=0}^{\min(k_n,k_m)}
 \frac{k_n!k_m! (-2 \lambda)^j }{j!(k_n-j)!(k_m-j)!}  (c_n^\dagger)^{k_n-j}( c_m^\dagger)^{k_m-j} \ 1
  \end{equation}
and we can reorder the sum by means of the index $\ell= k_m-j\geq 0$, since
$k_n \geq k_m$ holds by assumption. This finally leads to
\begin{align*}
 q_k^M(x)
 &= \(c^\dagger\)^{k[n,m]} (-2 \lambda)^{k_m}  k_m!  (c_n^\dagger)^{k_n-k_m}
 \sum_{\ell=0}^{k_m} \frac{k_n!  (- \tfrac{1}{2 \lambda}
 c_n^\dagger c_m^\dagger)^{k_m-\ell}}{(k_m-\ell)!(k_n-k_m+\ell)! \ell!}  \ 1 \\
 &= \(c^\dagger\)^{k[n,m]}  (-2 \lambda)^{k_m} k_m!  (c_n^\dagger)^{k_n-k_m}
 L^{(k_n-k_m)}_{k_m}\(\tfrac{1}{2\lambda} c_n^\dagger c_m^\dagger \)  \ 1,
\end{align*}
where we utilised that $c_n^\dagger$ and $c_m^\dagger$ commute.
\end{proof}


A direct, illustrative consequence can be deduced for the two-dimensional case, for which
$$
M = \left(\begin{array}{cc} \lambda_1 & \lambda_3 \\  \lambda_3 &
\lambda_2 \end{array}\right) ~\quad\text{with~}\lambda_1,\lambda_2,\lambda_3 \in \C.
$$

\begin{cor}
Let $k\in \N^2$ with $k_1\geq k_2$. Then,
\begin{equation}\label{eq:2d+complete+form}
q_k^M(x) = \begin{cases}
(-2\lambda_3)^{k_2} k_2! (a^\dagger_{\lambda_1})^{k_1-k_2}
L^{(k_1-k_2)}_{k_2}\(\tfrac{1}{2\lambda_3}a^\dagger_{\lambda_1}a^\dagger_{\lambda_2}\) 1,
& \lambda_3 \neq 0,\\
\(a^\dagger_{\lambda_1}\)^{k_1}  \(a^\dagger_{\lambda_2}\)^{k_2}   1,& \lambda_3 = 0.
\end{cases}
\end{equation}
The case $k_1\leq  k_2$ is analogous.
\end{cor}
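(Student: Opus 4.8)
The plan is to read off the corollary as the specialisation of Proposition~\ref{lem:laguerre+reduction} to $d=2$, where the single off-diagonal pair is $n=1$, $m=2$ and hence $\lambda = M_{12} = \lambda_3$. First I would record that the reduced matrix $M[1,2]$ is obtained by deleting $\lambda_3$, so that $M[1,2] = \mathrm{diag}(\lambda_1,\lambda_2)$ is diagonal. Consequently its raising operator $c^\dagger = b_{M[1,2]}^\dagger = 2x - M[1,2]\nabla$ splits componentwise into $c_1^\dagger = 2x_1 - \lambda_1 \partial_{x_1} = a^\dagger_{\lambda_1}$ and $c_2^\dagger = 2x_2 - \lambda_2 \partial_{x_2} = a^\dagger_{\lambda_2}$, which are exactly the univariate rescaled-Hermite raising operators appearing in the statement; moreover these two operators commute, which is what the Laguerre connection relies on.

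For the case $\lambda_3 \neq 0$ with $k_1 \geq k_2$ (so that $k_n = k_1 \geq k_m = k_2$), I would substitute directly into \eqref{eq:laguerre_eliminate}. The only thing to verify is the multi-index bookkeeping: $k - k_m(e_n + e_m) = (k_1 - k_2,\, 0)$, whence $(c^\dagger)^{k - k_m(e_n+e_m)} = (a^\dagger_{\lambda_1})^{k_1 - k_2}$; the scalar prefactor $(-2\lambda)^{k_m} k_m!$ becomes $(-2\lambda_3)^{k_2} k_2!$; and the Laguerre term $L^{(k_n-k_m)}_{k_m}\(\tfrac{1}{2\lambda} c_n^\dagger c_m^\dagger\)$ becomes $L^{(k_1-k_2)}_{k_2}\(\tfrac{1}{2\lambda_3} a^\dagger_{\lambda_1} a^\dagger_{\lambda_2}\)$. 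Assembling these yields precisely the first branch of \eqref{eq:2d+complete+form}.

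For the degenerate case $\lambda_3 = 0$ Proposition~\ref{lem:laguerre+reduction} does not apply, but here $M = \mathrm{diag}(\lambda_1,\lambda_2)$ is already diagonal, hence block-diagonal, so Remark~\ref{lem:factorisation} guarantees that the polynomials factorise. Concretely, I would iterate the raising relation of Lemma~\ref{lem:raising_operator} to get $q^M_k = (b_M^\dagger)^k 1$, and use that $b_M^\dagger = b_{M[1,2]}^\dagger$ acts as $a^\dagger_{\lambda_1}$ in the first and $a^\dagger_{\lambda_2}$ in the second coordinate, producing $q^M_k = (a^\dagger_{\lambda_1})^{k_1}(a^\dagger_{\lambda_2})^{k_2} 1$, the second branch. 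Finally, the case $k_1 \leq k_2$ follows from the symmetry $M_{12} = M_{21}$ by exchanging the two coordinates, exactly as stated for the general proposition. I do not expect a genuine obstacle here: the entire content is already carried by Proposition~\ref{lem:laguerre+reduction} and Remark~\ref{lem:factorisation}, and the only care required is the index translation in the $\lambda_3 \neq 0$ branch together with the correct identification of $a^\dagger_{\lambda_1}$ and $a^\dagger_{\lambda_2}$ as the diagonal components of the reduced raising operator $c^\dagger$.
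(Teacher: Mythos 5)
Your proposal is correct and follows exactly the route the paper intends: the corollary is stated as a direct specialisation of Proposition~\ref{lem:laguerre+reduction} to $d=2$ with $n=1$, $m=2$, $\lambda=\lambda_3$, and the paper offers no separate proof beyond this. Your index bookkeeping ($k-k_2(e_1+e_2)=(k_1-k_2,0)$), the identification of $c^\dagger$ with the diagonal operators $a^\dagger_{\lambda_1},a^\dagger_{\lambda_2}$, and the handling of the degenerate case $\lambda_3=0$ via Remark~\ref{lem:factorisation} are all accurate.
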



For the case $\lambda_3 = 0$, Equation (\ref{eq:2d+complete+form})
directly carries out to the factorisation
\begin{equation*}
q_k^M(x) = H^{\lambda_1}_{k_1}(x_1)H^{\lambda_2}_{k_2}(x_2).
\end{equation*}
In the case $\lambda_3 \neq 0$, (\ref{eq:2d+complete+form}) guarantees that each
$q^M_k$ is just a linear combination of at most $\min\{k_1,k_2\}$ many tensor products
of the form
\begin{equation}\label{eq:lambda2ist0}
\(a^\dagger_{\lambda_1}\)^n \(a^\dagger_{\lambda_2}\)^m1 = H^{\lambda_1}_{n}(x_1)H^{\lambda_2}_{m}(x_2)
\end{equation}
where $n-m = k_1-k_2$ and  $k_1-k_2 \leq n \leq k_1$, $m\leq   k_2$.
Moreover, if $\lambda_1 = \lambda_2 = 0$, the diagonal creation operators $a_0^{\dagger}$
  produce monomials, and we obtain the formula
  \begin{equation}\label{eq:2d_offdiag_laguerre}
    q_k^{M}(x)= (- \lambda_3)^{k_2} k_2! 2^{k_1} x_1^{k_1-k_2}L^{(k_1-k_2)}_{k_2}(\tfrac{2}{\lambda_3}x_1x_2)
  \end{equation}
  whenever $\lambda_3 \neq 0$ and $k_1 \geq k_2$. See
  also~\S\ref{sec:examples_pol} for illustrations.

 We note that by applying Proposition~\ref{lem:laguerre+reduction} iteratively one obtains an expansion of the general polynomials $q_k^M$ in terms
 of tensor product Hermite polynomials, see Appendix~\ref{app:Tensor}.

\section{Hagedorn wave packets in phase space}\label{sec:PhaseSpace}

There are various representations of quantum systems on the classical
phase space $\Rdd$. The most popular one is the Weyl correspondence,
where suitable selfadjoint quantum observables
$A:L^2(\Rd) \to L^2(\Rd)$ are represented by their semiclassical Weyl
symbol $a:\Rdd \to \R$, and wave functions $\phi,\psi \in L^2(\Rd)$ by
their Wigner function
\begin{equation}
  \label{eq:Wigner}
  \W^{\eps}(\phi, \psi)(x, \xi) = (2 \pi \eps)^{-d} \int_{\R^d} \overline{\phi}(x+\tfrac{y}{2}) \psi(x-\tfrac{y}{2}) \e^{i\xi^T y/\eps} \ dy.
\end{equation}
Then, matrix elements of $A$ can be computed via the phase space integral
\begin{equation}\label{eq:Weyl_expectation}
  \lw \phi, A \psi \rw_{L^2(\Rd)} = \int_\Rdd \W^{\eps}(\phi, \psi)(z) a(z) dz.
\end{equation}

In this chapter our aim is to analyse the Wigner functions of Hagedorn wave packets. For readability, we write
\begin{equation*}
  \W^\eps_{k,\ell}[Z,Y] =  \W^\eps(  \phi_k^\eps[Z,Y],   \phi_\ell^\eps[Z,Y])
\end{equation*}
for $k,\ell \in\N^d$, and regard $(k,\ell)$ as a multiindex in
$\N^{2d}$. Note, that by invoking~\cite[Equation (9.25)]{dG11} we
could also allow two wave packets that have different phase space
centers.


As our main result, we prove that the Wigner functions of Hagedorn wave packets on
$\Rd$ are given by related Hagedorn wave packets on the phase space $\Rdd$.
 For proving this invariance result, we first introduce a phase space lift of
Lagrangian frames.

\begin{lemma}\label{lem:phase+lift}
  Let $Z = (Q;P) \in \C^{2d\times d}$ and $Y = (X;K) \in \C^{2d\times d}$ be two normalised Lagrangian
  frames and $B = \tfrac{i}{2} Z^* \Omega Y$. We define the lifted matrices as
  \begin{align*}
    \Z  = \begin{pmatrix} \Q \\ \P \end{pmatrix}= \begin{pmatrix} \tfrac12  \overline{Z} & \tfrac12Z \\ \Omega \overline{Z} & -\Omega Z \end{pmatrix} \quad \text{and} \quad  \Y = \begin{pmatrix} \X \\ \K \end{pmatrix}= \begin{pmatrix} \tfrac12 \overline{Y} & \tfrac12 Y \\ \Omega \overline{Y} & - \Omega Y \end{pmatrix}  \end{align*}
  in $\C^{4d\times 2d}$ with the symplectic form $\Omega_{4d} = \Omega \otimes \Id_2$ on
    $\R^{4d}$. These have the following properties
  \begin{enumerate}
  \item $\Z $ and $\Y$ are normalised Lagrangian frames if and only if $Z$ and $Y$ are.
  \item The symplectic metric fulfills ~$\P \Q^{-1} = 2iG_{Z}$.
  \item For the lifted frames we have
    \begin{equation*}
      \B = \tfrac{i}{2} \Z^* \Omega_{4d} \Y
      = \begin{pmatrix} \overline{B}  &0 \\ 0 & B \end{pmatrix}.
    \end{equation*}
  \item\label{eq:factor_M}
    The lifted recursion matrix satisfies
    \begin{align*}
      \M = \begin{pmatrix}
            -\tfrac{1}{4} Y^T G_{Z} Y & (B^*B)^T \\
             B^*B &\tfrac{1}{4} \overline{Y^T G_{Z} Y}
           \end{pmatrix}.
    \end{align*}
  \item \label{eq:factor_M_special} For the special case $Y = Z$
    it holds $\B= \Id_{2d}$ and
    \begin{equation*}
      \M =  \Q^{-1}\overline{\Q} = \begin{pmatrix}0 & \Id_d\\ \Id_d & 0 \end{pmatrix}.
    \end{equation*}
  \end{enumerate}
\end{lemma}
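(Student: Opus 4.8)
The plan is to reduce every one of the five claims to the four fundamental bilinear identities carried by a normalised Lagrangian frame, namely the isotropy $Z^T\Omega Z=0$ with its conjugate $\overline Z^T\Omega\overline Z=0$, and the normalisation \eqref{eq:normalisation} in the form $\overline Z^T\Omega Z=-2i\,\Id_d$ together with its conjugate $Z^T\Omega\overline Z=2i\,\Id_d$; the mixed frame enters only through $\overline Z^T\Omega Y=-2iB$ and $Z^T\Omega\overline Y=2i\overline B$. Everything rests on the block action of the symplectic form: writing $\Z=(\Q;\P)$, one has $\Omega_{4d}\,(\Q;\P)=(-\P;\Q)$, so that $\Z^*\Omega_{4d}\Z=\P^*\Q-\Q^*\P$ and $\Z^T\Omega_{4d}\Z=\P^T\Q-\Q^T\P$, and analogously for the pairing with $\Y$. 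Since $\Y$ is assembled by exactly the same recipe as $\Z$, it suffices to carry out each block computation once, for $\Z$.

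I would start with (1). Substituting $\Q=\tfrac12(\overline Z,Z)$ and $\P=(\Omega\overline Z,-\Omega Z)$ into $\Q^T\P$, the diagonal blocks vanish by isotropy while the off-diagonal blocks both equal $i\,\Id_d$; hence $\Q^T\P$ is symmetric and $\Z^T\Omega_{4d}\Z=(\Q^T\P)^T-\Q^T\P=0$. This reduction is reversible, since $Z^T\Omega Z$ is automatically antisymmetric and symmetry therefore forces it to vanish, which supplies the ``only if'' direction. The same substitution into $\Q^*\P$ and $\P^*\Q$ gives $\Q^*\P=i\,\Id_{2d}$ and $\P^*\Q=-i\,\Id_{2d}$, so $\tfrac i2\Z^*\Omega_{4d}\Z=\Id_{2d}$; the invertibility of this pairing forces full column rank, and reading off its diagonal blocks recovers the normalisation of $Z$. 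Thus $\Z$ (and by the identical argument $\Y$) is a normalised Lagrangian frame exactly when $Z$ (respectively $Y$) is.

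For (2) I would prove the equivalent identity $\P=2iG_Z\Q$, thereby avoiding any inversion of $\Q$. Writing $G_Z=\Omega^T\Re(ZZ^*)\Omega$ with $\Re(ZZ^*)=\tfrac12(ZZ^*+\overline Z\,Z^T)$, the normalisation identities collapse this to the key relation $G_ZZ=i\Omega Z$ and its conjugate $G_Z\overline Z=-i\Omega\overline Z$; applying these columnwise to $\Q=\tfrac12(\overline Z,Z)$ reproduces $\P=(\Omega\overline Z,-\Omega Z)$ exactly, and invertibility of $\Q$ from (1) closes the claim. Property (3) is the same bookkeeping for $\Z^*\Omega_{4d}\Y=\P^*\X-\Q^*\K$: the mixed products reduce through $\overline Z^T\Omega Y=-2iB$ and $Z^T\Omega\overline Y=2i\overline B$, the off-diagonal blocks cancel between the two summands, and the factor $\tfrac i2$ turns the surviving diagonal blocks into $\overline B$ and $B$.

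The main work, and the main obstacle, is (4), which I would attack by splitting $\M=\tfrac14\Y^*G_\Z\overline\Y+\B^*\Q^{-1}\overline{\Q\B}$ into its two summands. For the second summand I first compute the pairing inverse $\bigl(\overline Z,Z\bigr)^{-1}=\bigl(\begin{smallmatrix}-\tfrac i2 Z^T\Omega\\ \tfrac i2\overline Z^T\Omega\end{smallmatrix}\bigr)$, from which $\Q^{-1}\overline\Q$ collapses to the anti-diagonal involution $\bigl(\begin{smallmatrix}0&\Id_d\\\Id_d&0\end{smallmatrix}\bigr)$; feeding in the block-diagonal shape of $\B$ from (3) produces precisely the off-diagonal blocks $(B^*B)^T$ and $B^*B$, and it simultaneously settles (5), since for $Y=Z$ one has $B=\Id_d$, hence $\B=\Id_{2d}$ and $\M=\Q^{-1}\overline\Q$. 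The delicate summand is the first one: here I would insert the explicit form \eqref{eq:sympl_metric} of $G_\Z$ in dimension $2d$ and contract with the blocks of $\Y$, expecting the off-diagonal contributions to cancel and the diagonal blocks to condense, after $G_ZY$-type manipulations, into $-\tfrac14 Y^TG_ZY$ and its conjugate. Tracking the conjugations and transposes through this final contraction, and confirming that nothing leaks into the off-diagonal, is where the bookkeeping is heaviest; the consistency check $Z^TG_ZZ=iZ^T\Omega Z=0$, which forces the diagonal blocks to vanish when $Y=Z$, recovers (5) from (4) and guards against sign slips.
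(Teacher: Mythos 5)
Your proposal is correct and follows essentially the same strategy as the paper: everything is reduced to block computations with the four pairings $Z^T\Omega Z=0$, $\overline{Z^T\Omega Z}=0$, $Z^*\Omega Z=-2i\,\Id_d$, $Z^T\Omega\overline Z=2i\,\Id_d$ and their mixed analogues with $Y$. Two points of comparison are worth recording. First, for part (2) the paper establishes $\Q^{-1}=i\P^*$ and then computes $\P\P^*=-\Omega(\overline Z Z^T+ZZ^*)\Omega=2G_Z$, whereas you prove the equivalent statement $\P=2iG_Z\Q$ from the identity $G_ZZ=i\Omega Z$ (and its conjugate); your identity is a clean consequence of $G_Z=\tfrac12\Omega^T(ZZ^*+\overline Z Z^T)\Omega$ together with isotropy and normalisation, and it is a genuinely tidy alternative --- note, though, that your explicit $\Q^{-1}$ in part (4) is exactly $i\P^*$, so the two routes meet anyway. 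Second, the only place where you stop short of a verified argument is the summand $\tfrac14\Y^*G_\Z\overline\Y$ of $\M$ in part (4), which you describe as an expected cancellation rather than a computation. The step does go through, but the paper's route is worth adopting because it makes the block-diagonality automatic rather than something to be checked: writing $G_\Z=\Omega_{4d}^T\Re(\Z\Z^*)\Omega_{4d}$, the quantity $\Y^TG_\Z\Y$ becomes a sum of products of the pairings $\Z^T\Omega_{4d}\Y$ and $\Z^*\Omega_{4d}\Y$ (and their conjugates), all of which you have already shown to be block-diagonal in parts (1) and (3); the diagonal blocks then assemble into $\mp Y^TG_ZY$ and its conjugate with no possibility of leakage into the off-diagonal. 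With that one computation completed, your proof is a full and faithful reconstruction of the paper's argument, and it is in fact more careful than the paper on the rank condition and on the ``only if'' direction of part (1).
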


\begin{proof}
For the first assertion, one computes
\begin{align*}
\Y^T \Omega_{4d} \Y&= \begin{pmatrix} -\overline{Y^T \Omega Y} & 0 \\ 0 & Y^T \Omega Y \end{pmatrix}, \quad \Y^* \Omega_{4d} \Y= \begin{pmatrix} -\overline{Y^* \Omega Y} & 0 \\ 0 & Y^* \Omega Y \end{pmatrix} ,
\end{align*}
and notes that $ \Y$ is normalised if and only if $Y$ is.

One can easily prove that $\Q^{-1} = i \P^*$. Hence, the second part
follows from
\begin{align*}
  \P\P^* =\left( \Omega \overline{Z} \ -\Omega Z\right)
  \begin{pmatrix} -Z^T\Omega \\ Z^* \Omega \end{pmatrix}
  = -\Omega \left( \overline{Z}Z^T + ZZ^* \right) \Omega = 2 G_Z .
\end{align*}
The formula for $\B$ is a direct computation.
For the claimed form of $\M$ first compute
\begin{align*}
  \Q^{-1}\overline{\Q}
  = \tfrac{i}{2}
  \begin{pmatrix} -Z^T\Omega Z&  -\overline{Z^* \Omega
      Z}\\ Z^*\Omega Z  & \overline{Z^T\Omega Z}
  \end{pmatrix}
  = \begin{pmatrix} 0 & \Id_d \\   \Id_d & 0 \end{pmatrix}.
\end{align*}
By calculating
\begin{align*}
\Y^T G_{\Z} \Y = \Y^T \Omega^T_{4d} \Z \Z^* \Omega_{4d} \Y = \begin{pmatrix} - \overline{Y^T G_{Z} Y} &0\\ 0 & Y^T G_{Z} Y\end{pmatrix}
\end{align*}
the claim follows.
The last case follows from the isotropy condition under the additional
assumption.
\end{proof}
Since $\Z$ is again a Lagrangian frame, we can lift all our previous results
to the phase space and consequentially find a family of Hagedorn wave
packets in doubled dimension.
In order to avoid confusion, we denote Hagedorn wave packets on
phase space by upper case letters~$\Phi^\eps_{(k,\ell)}$.
In particular, a direct computation shows that
\begin{equation}
\W^\eps(\phi_0[Z])(z) = (\pi \eps)^{-d} \e^{-z^T G_Z z/\eps} = (2 \pi \eps)^{-d/2} \Phi^\eps_{(0,0)}[\Z](z),
\end{equation}
see, e.g.,~\cite{LT14}. For excited wave packets the following result holds true.

\begin{thm}\label{thm:wigner-hagedorn}
  Assume that $Z,Y \in \C^{2d\times d}$ are normalised
  Lagrangian frames. Then, for $k,\ell\in \N^d$, the Wigner function
  $\W_{k,\ell}[Z,Y]$ is a Hagedorn wave packet on
  phase space,
  \begin{equation*}
    \W^\eps_{k,\ell}[Z,Y]  = (2\pi \eps)^{-d/2} \Phi_{(k,\ell)}^\eps[\Z,\Y].
  \end{equation*}
  Consequently, it holds
\begin{equation*}
\W^\eps_{k,\ell}[Z,Y](z)  =  \frac{(2\pi \eps)^{-d/2}  }{\sqrt{2^{|k| + |\ell|}k!\ell!}}
 q_{(k,\ell)}^\M \( \tfrac{1}{\sqrt{\eps}} \B^* \Q^{-1}z \)  \Phi_0^{\eps}[\Z](z),
\end{equation*}
where the lifted matrices $\Y,\Z,\Q,\P,\B$, and $\M$
have been defined in Lemma~\ref{lem:phase+lift}.
\end{thm}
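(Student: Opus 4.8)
The plan is to reduce the statement to a single generating-function identity and then to read off the result by comparing Taylor coefficients. For fixed $\eps$ and a normalised Lagrangian frame, I first record the closed form of the wave-packet generating function $\Psi^s:=\sum_{\ell\in\N^d}\frac{s^\ell}{\sqrt{\ell!}}\,\phi^\eps_\ell[Z,Y]$ in the variable $s\in\C^d$. Feeding Proposition~\ref{prop:polynomial_hagedorn_trafo} into the polynomial generating function of Lemma~\ref{lem:gen_func} and absorbing the factor $2^{|\ell|}$ by the rescaling $s\mapsto s/\sqrt2$, one obtains
\begin{equation*}
  \Psi^s(x) = \exp\!\Big(\sqrt{\tfrac2\eps}\,s^T B^*Q^{-1}x - \tfrac12\,s^T M s\Big)\,\phi^\eps_0[Z](x),
\end{equation*}
that is, an affine-linear exponential times the Gaussian ground state. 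Since the Wigner transform is conjugate-linear in its first and linear in its second slot, $\W^\eps(\Psi^u,\Psi^s)=\sum_{k,\ell}\frac{\bar u^{\,k}s^\ell}{\sqrt{k!\,\ell!}}\,\W^\eps_{k,\ell}[Z,Y]$, so it suffices to evaluate $\W^\eps(\Psi^u,\Psi^s)$ in closed form and to recognise it, under the identification $\tau=(\bar u,s)\in\C^{2d}$, as $(2\pi\eps)^{-d/2}$ times the analogous phase-space generating function $\exp(\sqrt{2/\eps}\,\tau^T\B^*\Q^{-1}z-\tfrac12\tau^T\M\tau)\,\Phi^\eps_0[\Z](z)$ assembled from the lifted frames of Lemma~\ref{lem:phase+lift}.

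The computation itself is a Gaussian integral. Writing $\Psi^s=\e^{L_s}\phi^\eps_0[Z]$ with $L_s$ the affine exponent above, substitution into~\eqref{eq:Wigner} gives
\begin{equation*}
  \W^\eps(\Psi^u,\Psi^s)(x,\xi) = (2\pi\eps)^{-d}\!\int_{\R^d}\! \e^{\,\overline{L_u}(x+\tfrac y2)+L_s(x-\tfrac y2)}\,\overline{\phi^\eps_0[Z]}(x+\tfrac y2)\,\phi^\eps_0[Z](x-\tfrac y2)\,\e^{i\xi^T y/\eps}\, dy .
\end{equation*}
The purely quadratic $s$- and $u$-terms pull out of the integral, while the remaining exponent is quadratic in $y$ with a linear part that is affine in $z=(x,\xi)$ and in $(\bar u,s)$. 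Crucially, the quadratic-in-$y$ form is exactly the one occurring in the ground-state computation, so its Gaussian normalisation and the resulting $z$-Gaussian reassemble the already known factor $\W^\eps(\phi^\eps_0[Z])(z)=(2\pi\eps)^{-d/2}\Phi^\eps_0[\Z](z)$. I would organise the calculation so as to split off this factor first; completing the square in $y$ then produces exactly two further contributions, a bilinear coupling between $(\bar u,s)$ and $z$, and an additional quadratic form in $(\bar u,s)$.

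The \emph{decisive step}, and the main obstacle, is to verify that these two collected pieces coincide with the lifted data: the coupling must reproduce $\sqrt{2/\eps}\,\tau^T\B^*\Q^{-1}z$, and the total $(\bar u,s)$-quadratic (pulled-out part plus the completion-of-square part) must reproduce $-\tfrac12\tau^T\M\tau$. Here I would use $\Q^{-1}=i\P^*$ and $\P\Q^{-1}=2iG_Z$ (part~2), the block form of $\B$ (part~3), and above all the explicit block shape of $\M$ in part~\ref{eq:factor_M}, together with $B=\tfrac i2 Z^*\Omega Y$ and the isotropy and normalisation of $Z,Y$, to rewrite the raw output of the integral in terms of the lifted matrices. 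This algebraic matching is where the particular form of the lift is forced, and it demands careful bookkeeping of conjugates, of $\Omega$, and of the $\sqrt\eps$-scalings; the off-diagonal blocks $B^*B$ of $\M$ arise precisely from the $\bar u$--$s$ cross terms generated by the $y$-integration, whereas the diagonal blocks come from the pulled-out quadratics. Once the generating-function identity is established, comparing coefficients of $\bar u^{\,k}s^\ell$ (legitimate by analyticity in the $2d$ independent variables $(\bar u,s)$) yields $\W^\eps_{k,\ell}[Z,Y]=(2\pi\eps)^{-d/2}\Phi^\eps_{(k,\ell)}[\Z,\Y]$. The explicit second formula then follows by applying Proposition~\ref{prop:polynomial_hagedorn_trafo} to the normalised Lagrangian frames $\Z,\Y$ (admissible by part~1 of Lemma~\ref{lem:phase+lift}) and using $|(k,\ell)|=|k|+|\ell|$ and $(k,\ell)!=k!\,\ell!$.
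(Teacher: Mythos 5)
Your proposal is correct and takes essentially the same route as the paper: both form the Gaussian generating function $\sum_k \frac{t^k}{\sqrt{k!}}\sqrt{2^{|k|}}\,\phi_k^\eps[Z,Y]$ from Proposition~\ref{prop:polynomial_hagedorn_trafo} and Lemma~\ref{lem:gen_func}, evaluate its Wigner transform as a Gaussian integral, identify the resulting exponent with the lifted matrices of Lemma~\ref{lem:phase+lift}, and conclude by comparing Taylor coefficients (justified by differentiating under the integral sign). Your handling of the antilinearity via $\tau=(\bar u,s)$ and the rescaling $s\mapsto s/\sqrt2$ are only cosmetic variations, and your level of detail on the algebraic matching step mirrors the paper's.
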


\begin{proof}
  For the generalised Hagedorn wave packets, we find the generating
  function
  \begin{equation}\label{eq:gen_fun_hagedorn_pol}
    \sum_{k\in\N^d} \frac{t^k}{\sqrt{k!}} \sqrt{2^{|k|}} \phi_k^\eps[Z,Y](x)
    = \e^{\frac{2}{\sqrt{\eps}} t^T B^* Q^{-1}x - t^T M t}
    \phi_0^{\eps}[Z](x) =: h_t(x)
  \end{equation}
  by identifying the polynomial factors in Proposition~\ref{prop:polynomial_hagedorn_trafo}.
  Note that $h_t$ is
  a Gaussian function in $x$ and $t$. By writing $z=(x,\xi)$
  and $v=(t,s)$, we can easily compute the Wigner transformation of $h_t$ and $h_s$ as
  \begin{align*}
 \W^\eps(h_t,h_s)(z) &=(2 \pi \eps)^{-d} \int_{\R^d}
      \overline{h_t(x+\frac y2)}
      h_s(x-\frac y2)  \e^{i\xi^T y/\eps} \ dy \\
      &= (\pi \eps)^{-d}  \e^{ - z^T G_{Z}z/ \eps}
      \e^{ \frac{2}{\sqrt{\eps}} v^T\B^* \Q^{-1}z + v^T\M v} ,
  \end{align*}
  where we identified the lifted matrices from  Lemma~\ref{lem:phase+lift}.

  Then, by formally interchanging the order of integration and summation, it is
  clear that
  \begin{equation*}
     \W^\eps(h_t,h_s)(z)
    = \sum_{k,\ell \in\N^d} \frac{t^ks^\ell}{\sqrt{k!\ell!}}
    \sqrt{2^{|k| + |\ell|}} \W_{k,\ell}[Z,Y](z),
  \end{equation*}
  which implies the result. For a rigorous justification, note that
  due to the Gaussian decay we can differentiate under the integral
  sign with respect to $v=(t,s)$. Evaluating the differentiated
  function at $v=(t,s)=(0,0)$ then shows the result.
\end{proof}


\begin{remark}
By invoking part (5) of Lemma \ref{lem:phase+lift}, Theorem~\ref{thm:wigner-hagedorn} explains the factorisation result for
Hagedorn wave packets that has recently been discovered in \cite{LT14}.
Namely, the polynomial prefactor of $\W^\eps_{k,\ell}[Z,Z]$ is a product of $d$ polynomials,
\begin{align*}
  \W^\eps_{k,\ell}[Z,Z](z)
  &=  \frac{(2\pi \eps)^{-d/2}  }{\sqrt{2^{|k| + |\ell|}k!\ell!}}
    \Phi_0^\eps[\Z](z)
\prod_{j=1}^d q_{(k_j,\ell_j)}^{N} \( \(\tfrac{1}{\sqrt{\eps}}\Q^{-1}z\)_j, \(\tfrac{1}{\sqrt{\eps}}\Q^{-1}z\)_{d+j} \) ,
\end{align*}
where, for $k_j\geq \ell_j$ and $N=\begin{pmatrix} 0&1\\1 &0\end{pmatrix}$,
\[
 q_{(k_j,\ell_j)}^{N}(x_1,x_2) = (- 1)^{\ell_j} \ell_j! 2^{k_j} x_1^{k_j-\ell_j}L^{(k_j-\ell_j)}_{\ell_j}(2x_1x_2)
\]
 is a Laguerre  polynomial of the form~\eqref{eq:2d_offdiag_laguerre}.
\end{remark}

\section{Examples}\label{sec:Examples}

\subsection{Polynomial prefactor}\label{sec:examples_pol}

In order to illustrate different types of polynomials $q_k^M$,
we present various examples for the nodal sets of two-dimensional polynomials $q^M_{i,j}$, for
$(i,j) \in \N^2$. For simplicity, we restrict ourselves to real matrices $M$
such that
the polynomials generated by the TTRR~\eqref{eq:general_TRR}
have real coefficients.

As examples we consider the unitary, symmetric matrices
\begin{align}\label{eq:M-matrices}
M^{(1)} = \begin{pmatrix} 1&0\\ 0&1\end{pmatrix}, \quad  M^{(2)} = \begin{pmatrix} 0&1\\ 1&0\end{pmatrix}, \quad M^{(3)} & = \tfrac{1}{\sqrt{2}} \begin{pmatrix} 1&1\\ 1&-1\end{pmatrix}.
\end{align}


By recalling~\eqref{eq:lambda2ist0} and~\eqref{eq:2d_offdiag_laguerre}, the
polynomial $q_{4,6}^{M^{(1)}}$ corresponds to a simple tensor product of
one dimensional Hermite polynomials, while
\begin{equation}\label{eq:M2formula}
q_{7,6}^{M^{(2)}}  = 6! 2^{7} x_1 L^{(1)}_{6}(2x_1x_2).
\end{equation}
One can see the consequences of these simple formulas in the structure
of the nodal sets depicted in the upper panels of figure~\ref{fig:m123}.

The matrix $M^{(3)}$ gives rise to a more complicated
mixing between the two variables. This can also be seen from the
illustration of $q_{6,5}^{M^{(3)}}$ in the lower  panel of figure~\ref{fig:m123}.  It is striking that already
for real matrices $M$  in two dimensions the polynomials
generated by the simple TTRR~\eqref{eq:general_TRR} develop such
nontrivial nodal sets.

\begin{figure}[h!]
\includegraphics[width = 6cm]{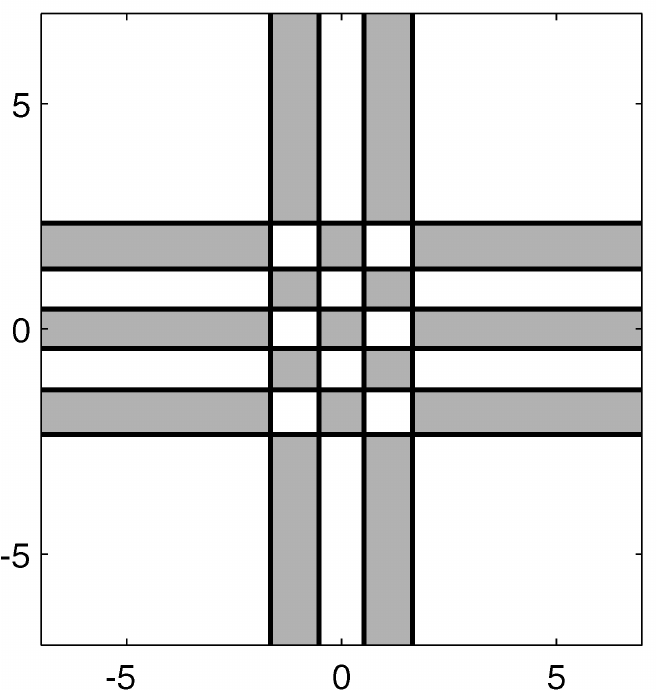}
\includegraphics[width = 6cm]{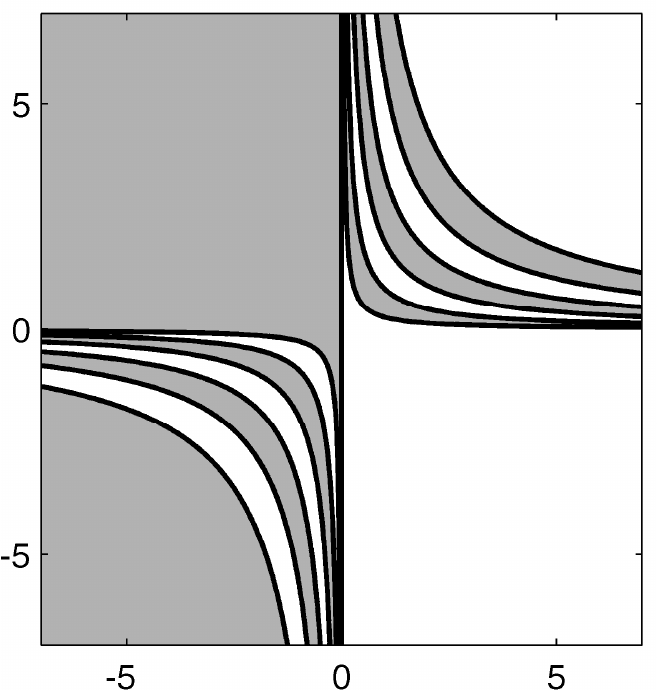}
\includegraphics[width = 6cm]{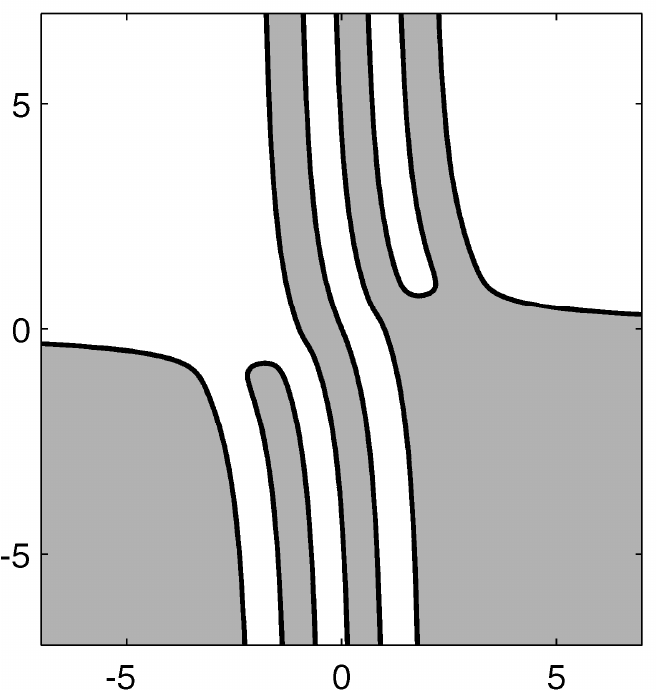}
\caption{The nodal sets of three examplary two-dimensional polynomials,
associated with the matrices $M^{(1)}$ (upper left),
$M^{(2)}$ (upper right) and $M^{(3)}$ (lower).
Regions with negative values are highlighted by grey coloring.}
\label{fig:m123}
\end{figure}
\subsection{Hagedorn wave packets}

In this section we present two-dimensional examples of Hagedorn
wave packets in order to indicate the variety of structures that can be realised.

For our illustrations we employ the same matrices $M^{(j)}$, $j\in\{1,2,3\}$ from~\eqref{eq:M-matrices} as used
for the polynomials in~\S\ref{sec:examples_pol}.
Our choice for the  Lagrangian frames $Z_j = (Q_j;P_j)$ satisfying
\[
Q_j^{-1}\overline{Q_j} = M^{(j)}, \quad j=1,2,3,
\]
 is given by
\begin{align*}
Z_1  = \frac{1}{\sqrt{2}} \begin{pmatrix}1&1\\1 &-1\\ i &i \\ i  &-i \end{pmatrix} ,\quad
Z_2   = \frac{1}{2} \begin{pmatrix}1+i&1-i\\1-i &1+i\\ \frac{i-1}2 & \frac{i+1}2 ,\\
 \frac{i+1}2  & \frac{i-1}2 \end{pmatrix}, \quad Z_3 & = \begin{pmatrix}i&-i(1+\sqrt{2})\\1 &\sqrt{2}-1\\
 \tfrac{1-\sqrt{2}}{2\sqrt{2}} & \tfrac{1}{2\sqrt{2}}  \\\tfrac{i+i\sqrt{2}}{2\sqrt{2}}  &\tfrac{i}{2\sqrt{2}} \end{pmatrix}.
\end{align*}
One can easily check that $Z_1,Z_2$, and $Z_3$ are normalised Lagrangian frames.

We also consider an example of a generalised wave packet associated with the two Lagrangian frames $Z_2$ and $Z_3$.
In this case, the mixing matrix of the polynomials is given by the formula in Proposition~\ref{prop:polynomial_hagedorn_trafo}
and does not equal $Q^{-1}\overline{Q}$.

We stress that despite the fact that the polynomials
\begin{equation*}
q_k^{M_j}, \quad j=1,2,3,
\end{equation*}
 have real coefficients, the wave packets itself are not real-valued since the
polynomials are evaluated on the subspace $Q_j^{-1}\Rd \subset \C^d$, which does
not coincide with $\Rd$ except for $Z_1$.
This case is very special, since one has
\begin{equation*}
 M = \Id \Longleftrightarrow  Q\in \R^{d\times d},
\end{equation*}
which implies that the standard tensor Hermite polynomials $ q_k^{\Id}$ appear only
together with real transformations $Q$. Hence, all Hagedorn wave packets
with $Q^{-1}\overline{Q} = \Id$ correspond to rescaled, sheared,
or shifted multivariate Hermite functions, while this is not true in the case $Q\notin \R^{d\times d}$.

\begin{figure}[h!]
\includegraphics[width = 13cm]{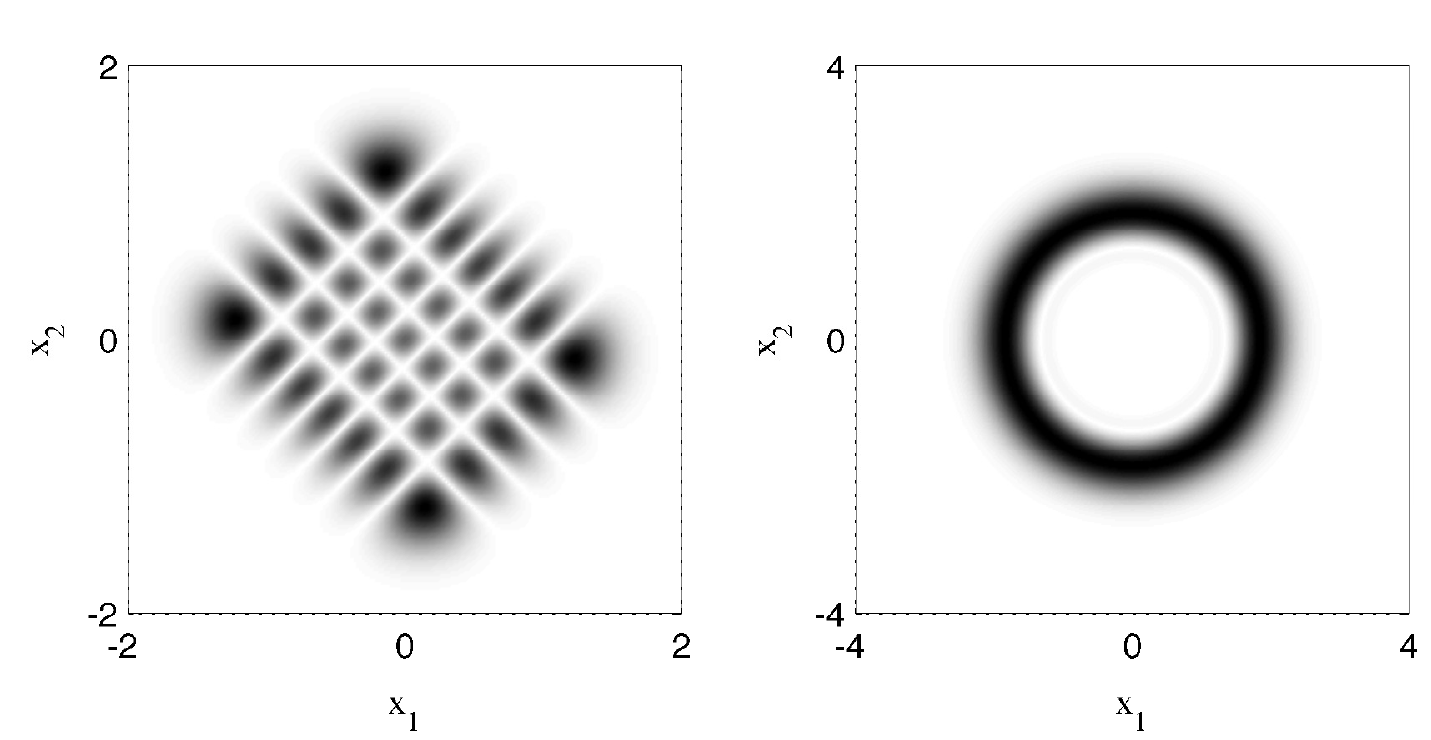}
\caption{
Intensity plot of the absolute value of the two examplary
  two-dimensional Hagedorn wave packets
$\phi_{4,6}^{0.1}[Z_1](x)$ (left) and $\phi_{7,6}^{0.1}[Z_2](x)$ (right), $\eps = 10^{-1}$.
Darker colouring represents higher absolute values.}
\label{fig:hagedorn1und2}
\end{figure}

Figure~\ref{fig:hagedorn1und2} displays the absolute value of two
Hagedorn wave packets associated with the Lagrangian frames $Z_1$ and
$Z_2$. One can recognise that the Hagedorn wave packet
$\phi_{4,6}^{0.1}[Z_1]$ on the left hand side is just a rotated and rescaled
Hermite function, as expected. In contrast, the wave packet
$\phi_{7,6}^{0.1}[Z_2]$ associated with $M^{(2)}$ has a circular
structure, which arises due to a complex rotation of the hyperbolas
from the upper right panel of figure~\ref{fig:m123}.

For  $Z_3$  the resulting wave packets exhibit  complicated
structures. The same is true for generalized wave packets associated with the two
different Lagrangian frames $Z_2$ and $Z_3$,
 as illustrated by the examples in figure~\ref{fig:hagedorn3und4}.

For us, the variety of different Hagedorn wave packets is very fascinating.
We suggest that the selective use of classes of Hagedorn wave packets with specific geometries
could prove useful for designing meshfree numerical discretizations of evolution equations
with a priori known symmetries.


\begin{figure}[h!]
\includegraphics[width = 13cm]{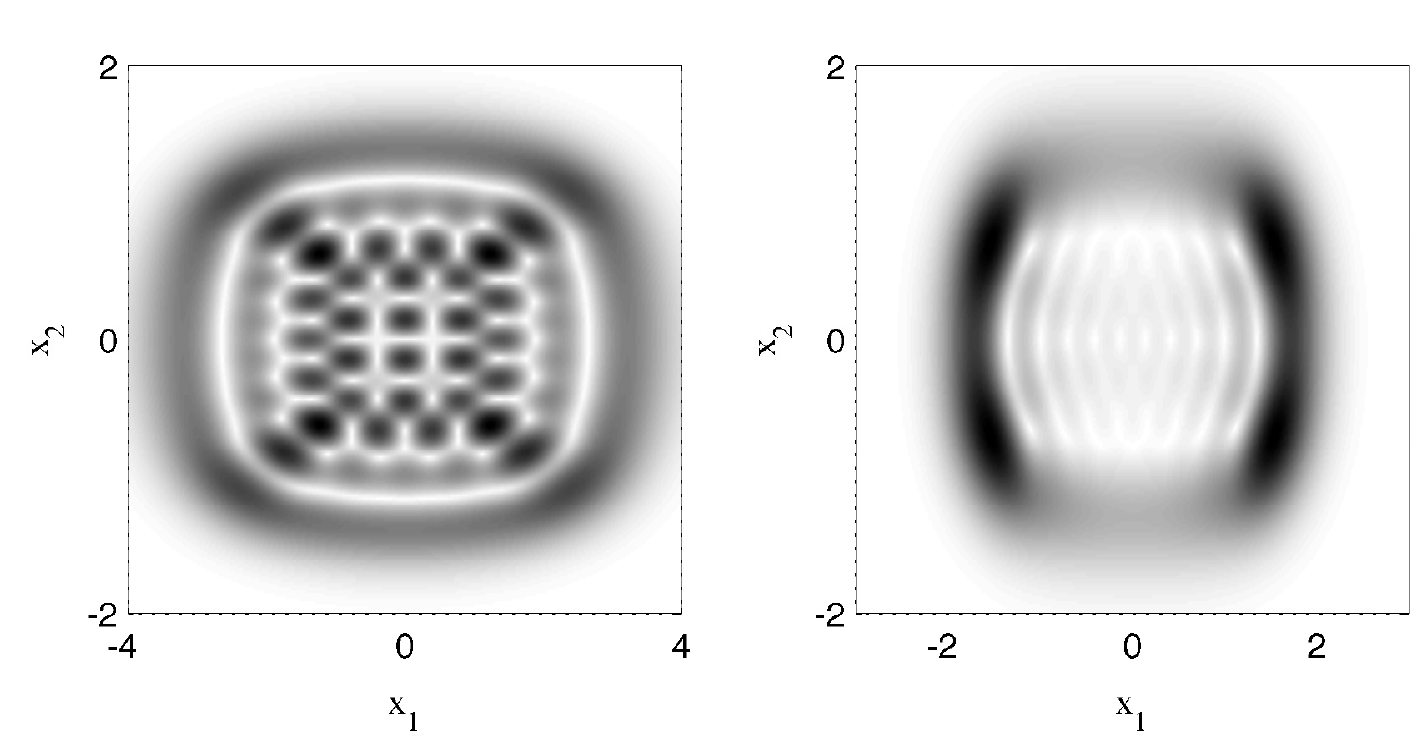}
\caption{
Intensity plot of the absolute value of the
  two-dimensional Hagedorn wave packets
$\phi_{6,5}^{\eps}[Z_3](x)$ (left) and  $\phi_{3,7}^{\eps}[Z_2,Z_3](x)$ (right), where $\eps = 10^{-1}$.}
\label{fig:hagedorn3und4}
\end{figure}

\section{Acknowledgments}
The authors wish to thank  George Hagedorn and Caroline Lasser for
many helpful discussions and valuable comments on the manuscript.

The first named author was supported by the UK Engineering and
Physical Sciences Research Council (EPSRC) grant EP/H023348/1 for the
University of Cambridge Centre for Doctoral Training, the Cambridge
Centre for Analysis. The second and third named authors were supported by the German Research Foundation (DFG),
Collaborative Research Center SFB-TRR 109. The third author gratefully acknowledges support by
the graduate program TopMath of the Elite Network of Bavaria.

\appendix
\section{Proof of Proposition \ref{prop:polynomial_hagedorn_trafo}}\label{app:Proof}

Using the found structure of the polynomials $q_k$
and the corresponding ladder operators, we can give a simple proof of Proposition  \ref{prop:polynomial_hagedorn_trafo}.
\begin{proof}
We prove the assertion by induction over $|k|$. We know that
the raising operator $A^\dagger$ from \eqref{eq:ladder_pos} only creates polynomial prefactors
in front of $\phi_0^\eps$.
The idea is to rewrite $A^\dagger$ as an operator
 acting  on the prefactors only, and identifying it with the raising operator of the polynomials.

Since the case $k=0$ is trivial, we assume the assertion to be true for some $k\in \N^d$.
  Then, for $j=1,\dots,d$, by
  \eqref{eq:hagedorn_raising} we have
  \begin{align*}
    &\phi^\eps_{k+e_j}[Z,Y](x)
    =\frac{1}{\sqrt{k_j+1}}  A_j^\dagger[Y]   \phi^\eps_{k}[Z,Y](x) \\
    &= \frac{2^{-\frac{|k|+1}2}}{\sqrt{(k+e_j)!}}
      \left( \frac{i}{\sqrt{\eps}}
      \left[  X^*(i \eps \nabla_x - PQ^{-1}x) + K^* x \right]_j
      q_k^M \( \tfrac{1}{\sqrt{\eps}} B^* Q^{-1}x \)
      \right)
      \phi^\eps_0[Z](x).
  \end{align*}
  Hence, the result is true as long as
  \begin{equation*}
    q^M_{k+e_j}\( \tfrac{1}{\sqrt{\eps}} B^* Q^{-1}x \)
    = [-\sqrt{\eps} X^* \nabla_x + \tfrac{2}{\sqrt\eps} B^* Q^{-1} x]_j
    q^M_{k}\( \tfrac{1}{\sqrt{\eps}} B^* Q^{-1}x \),
  \end{equation*}
  which follows by invoking the polynomial raising operator of
  Lemma~\ref{lem:raising_operator}. It is easy to see that $M$ is symmetric since the components of $A^{\dagger}$
  commute by the isotropy condition.

The structure of the matrix $M$ follows from the fact that both $Q^{-1}\overline{Q}$ and $QQ^*$ are symmetric and
  \begin{equation*}
  B^*Q^* = \frac{i}{2}(K^*Q-X^*P)Q^* =  \frac{i}{2}(K^*QQ-X^*\overline{P}Q^T-2iX^*)
  \end{equation*}
  since $PQ^* = (QP^*+2i\Id)^T$. Hence,
  \begin{align*}
  B^*Q^*Q^{-T}\overline{B} & = \frac{i}{2}(K^*\overline{Q}\overline{B}-X^*\overline{P}\overline{B})+X^*Q^{-T}\overline{B} =  \frac{i}{2}Y^* \Omega \overline{Z}\overline{B}+X^*Q^{-T}\overline{B}\\
  & = \frac{1}{4} Y^* \Omega \overline{Z} Z^T \Omega \overline{Y}+X^*Q^{-T}\overline{B} = -\frac{1}{4} Y^* \overline{G_{Z_0}} \overline{Y}+X^*Q^{-T}\overline{B},
  \end{align*}
where the last equality is due to the isotropy of $Y$.
\end{proof}

\section{Tensor product representation}\label{app:Tensor}

By recursively applying Proposition~\ref{lem:laguerre+reduction}, one
can derive an expansion of the general polynomials $q_k^M$ in terms of
tensor products of univariate Hermite polynomials
associated with the diagonal entries of $M$.
Moreover, the required
number of summands depends only on the number of offdiagonal entries of $M$ for which  $M_{ij}\neq 0$,
and the corresponding indices $k_i$, and~$k_j$.
\begin{proposition}[Tensor product representation]\label{thm:laguerre}
Let $M\in \C^{d\times d}$ be symmetric, and suppose that there are exactly $n\leq d(d-1)/2$
different off-diagonal index pairs $1\leq \alpha_j <\beta_j \leq d$, $j=1,\hdots,n$,
for which $M_{ \alpha_j \beta_j} = \lambda_j \neq 0$.
Then, for $k\in \N^d$,
\begin{equation}
q_k^M(x) =  \sum_{\substack{\ell \in\N^n \\ \ell_j \leq \min\{k_{\alpha_j},k_{\beta_j}\} }} (-2\lambda)^{\ell}
\ell! {{k_\alpha}\choose \ell}  {{k_\beta}\choose \ell}
\prod_{i =1}^d  H^{M_{i,i}}_{k_i - (E\ell)_i}(x_i)
\end{equation}
with standard multiindex notation, e.g.  $k_\alpha \in \N^n$ with $(k_\alpha)_j = k_{\alpha_j}$.
The index matrix $E\in \N^{d\times n}$ is defined by
\begin{equation}
 E_{ij} = (e_{\alpha_i} + e_{\beta_i})_j.
\end{equation}
\end{proposition}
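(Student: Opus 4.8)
The plan is to prove the formula directly from the generating function of Lemma~\ref{lem:gen_func}, rather than by literally unwinding the recursion of Proposition~\ref{lem:laguerre+reduction}, since the generating function separates the diagonal and off-diagonal contributions of $M$ in a single step. First I would write $f_M(x,t) = \exp(2x^Tt - t^T M t)$ and split the quadratic form, using that the only nonzero off-diagonal entries of the symmetric matrix $M$ are the prescribed pairs, as
\begin{equation*}
t^T M t = \sum_{i=1}^d M_{ii} t_i^2 + 2\sum_{j=1}^n \lambda_j\, t_{\alpha_j} t_{\beta_j}.
\end{equation*}
This factorises the generating function into a diagonal and a cross part,
\begin{equation*}
f_M(x,t) = \prod_{i=1}^d \exp\!\big(2x_i t_i - M_{ii} t_i^2\big)\;\prod_{j=1}^n \exp\!\big(-2\lambda_j t_{\alpha_j} t_{\beta_j}\big).
\end{equation*}

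The second step is to expand each factor as a power series. By the one-dimensional instance of Lemma~\ref{lem:gen_func}, the $i$-th diagonal factor equals $\sum_{n_i\ge 0} \tfrac{t_i^{n_i}}{n_i!} H^{M_{ii}}_{n_i}(x_i)$, the generating function of the univariate rescaled Hermite polynomials, while each cross factor expands as the ordinary exponential series $\sum_{\ell_j\ge 0} \tfrac{(-2\lambda_j)^{\ell_j}}{\ell_j!}\,(t_{\alpha_j} t_{\beta_j})^{\ell_j}$. Multiplying everything out, the total power of a variable $t_i$ is $n_i + (E\ell)_i$, where $(E\ell)_i$ collects precisely the contributions of those cross terms with $\alpha_j=i$ or $\beta_j=i$. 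To read off $q^M_k$ I would extract the coefficient of $t^k$ and multiply by $k! = \prod_i k_i!$; matching powers forces $n_i = k_i - (E\ell)_i$, so the Hermite orders are exactly $k_i-(E\ell)_i$ as claimed, and $\ell$ ranges over $\{\ell\in\N^n : (E\ell)_i\le k_i\ \forall i\}$, which for index-disjoint pairs reduces to the stated constraint $\ell_j\le\min\{k_{\alpha_j},k_{\beta_j}\}$.

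The final and most delicate step is reconciling the scalar coefficients. Collecting the factorials produced above yields, for each index $i$, a falling factorial $k_i!/(k_i-(E\ell)_i)!$ together with the factor $\prod_j (-2\lambda_j)^{\ell_j}/\ell_j!$ from the cross terms, and the task is to reassemble these into the product form $(-2\lambda)^{\ell}\,\ell!\,\binom{k_\alpha}{\ell}\binom{k_\beta}{\ell}$. I expect the main obstacle to lie exactly here, in the situation where an index $i$ belongs to several of the pairs $(\alpha_j,\beta_j)$: then $(E\ell)_i$ is a sum of several $\ell_j$, and the single falling factorial $k_i!/(k_i-(E\ell)_i)!$ must be distributed among the binomials with proper care for the shifted upper indices. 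A clean way to organise this bookkeeping is to run the argument by induction on the number $n$ of active pairs, matching the authors' remark: eliminating one pair $(\alpha_n,\beta_n)$ via Proposition~\ref{lem:laguerre+reduction} replaces $M$ by $M[\alpha_n,\beta_n]$ and lowers $k$ to $k-\ell_n(e_{\alpha_n}+e_{\beta_n})$, and the inductive hypothesis applied to the reduced frame supplies the remaining binomials. One then has to verify that the shifted indices recombine to reproduce $(E\ell)_i$ and the claimed coefficient; carrying out this recombination carefully for shared indices is the crux of the proof.
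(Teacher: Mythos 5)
Your route is genuinely different from the paper's and, in my view, more transparent. The paper proves the proposition by iterating its equation~\eqref{eq:ladder_matrix_delete2}, i.e.\ by eliminating the off-diagonal pairs one at a time via Proposition~\ref{lem:laguerre+reduction} until only a diagonal matrix remains; you instead split the quadratic form in the generating function once, expand every factor, and read off the coefficient of $t^k$. Both are valid, and your version replaces an $n$-fold nested recursion by a single multinomial bookkeeping step. What your computation actually produces for the coefficient of $\prod_i H^{M_{ii}}_{k_i-(E\ell)_i}(x_i)$ is
\begin{equation*}
\prod_{i=1}^d \frac{k_i!}{\bigl(k_i-(E\ell)_i\bigr)!}\;\prod_{j=1}^n \frac{(-2\lambda_j)^{\ell_j}}{\ell_j!},
\end{equation*}
and the ``delicate step'' you flag is exactly the right place to stop and look: this expression equals the displayed $(-2\lambda)^{\ell}\,\ell!\,\binom{k_\alpha}{\ell}\binom{k_\beta}{\ell}$ precisely when every index $i$ belongs to at most one active pair, because only then is $(E\ell)_i$ a single $\ell_j$. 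If an index is shared, say $i=\alpha_1=\alpha_2$, the falling factorial $k_i!/(k_i-\ell_1-\ell_2)!$ does not equal $\bigl(k_i!/(k_i-\ell_1)!\bigr)\cdot\bigl(k_i!/(k_i-\ell_2)!\bigr)$, so the binomials in the statement would need shifted upper indices. This is not a defect of your method: the paper's own iteration produces $\binom{k'_{\alpha_j}}{m}$ with $k'$ already reduced by the earlier eliminations and silently relabels these as $\binom{k_{\alpha_j}}{m}$ when ``iterating the procedure.'' Your generating-function argument is therefore a complete and correct proof of the falling-factorial form of the identity, and of the stated form whenever the pairs are index-disjoint; I would simply present the coefficient as above rather than attempt the recombination you describe in your last step, which cannot succeed verbatim for shared indices.
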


\begin{proof}
  We start by recalling~\eqref{eq:ladder_matrix_delete}, which can be rewritten as
  \begin{equation}\label{eq:ladder_matrix_delete2}
 q_k^{M}(x) =  \sum_{m=0}^{\min\{k_{\alpha_j},k_{\beta_j}\}} m!
 {{k_{\alpha_j}}\choose m}  {{k_{\beta_j}}\choose m}    (-2 \lambda_j)^m
 q_{k - m(e_{\alpha_j} + e_{\beta_j})}^{M[\alpha_j,\beta_j]}(x) 1,
  \end{equation}
  for all $j=1,\dots,n$.
  One can use the matrix $E$ in order to write
  \begin{equation*}
    k - m(e_{\alpha_j} + e_{\beta_j}) = k - (Em\widehat e_j)
  \end{equation*}
  where $\widehat e_j$ denotes the $j$-th unit vector in
  $\R^n$. Iterating this procedure until all
    offdiagonal entries of $M$ are deleted, completes the proof.
\end{proof}

%
%

%
%
%
%
%
%
%
%
%

\bibliographystyle{abbrv}
\bibliography{biblio}

\end{document}